\newtheorem{theorem}{Theorem}
\newtheorem{lemma}[theorem]{Lemma}
\newtheorem{proposition}[theorem]{Proposition}
\newtheorem{corollary}[theorem]{Corollary}
\theoremstyle{definition}
\newtheorem{definition}[theorem]{Definition}
\newtheorem{example}[theorem]{Example}
\newcommand{\KS}{{\rm KS}}
\title{Second-order stochastic comparisons of order statistics}
\email{tommaso.lando@unibg.it}
\email{idir.bhh@gmail.com}
\email{paulo@mat.uc.pt}
\date{May 25, 2020}
\keywords{stochastic dominance, order statistics, beta family, reliability, failure rate, nonparametric test}
\subjclass[2010]{60E15, 62G30, 62G10, 62Nxx}
\begin{document}

\begin{abstract}
We study the problem of comparing ageing patterns of the lifetime of $k$-out-of-$n$ systems. Mathematically, this reduces to being able to decide about a stochastic ordering relationship between different order statistics. We discuss such relationships with respect to second-order stochastic dominance, obtaining characterizations through the verification of relative convexity with respect to a suitably chosen reference distribution function. We introduce a hierarchy of such reference functions leading to classes, each expressing different and increasing knowledge precision about the distribution of the component lifetimes. Such classes are wide enough to include popular families of distributions, such as, for example, the increasing failure rate distributions. We derive sufficient dominance conditions depending on the identification of the class which includes the component lifetimes. Concerning the conditions, as expected, relying on a larger class of distributions, meaning that we have less precise information about the components' behaviour, leads to the need of stronger assumptions. We discuss the applicability of this method and characterize a test for the relative convexity, as this notion plays a central role in the proposed approach.
\end{abstract}

\maketitle

\section{Introduction}
In this paper we deal with the problem of comparing the lifetimes of $k$-out-of-$n$ systems with respect to ageing properties, by relying on the theory of \textit{stochastic orders}, Shaked and Shanthikumar~\cite{shaked2007}. We recall that the lifetime of a $k$-out-of-$n$ system is represented by the waiting time until fewer than $k$ components remain functioning in a system of $n$ components. Within a probabilistic framework, if we assume that the lifetime of each component is distributed according to a common, or \textit{parent}, cumulative distribution function (CDF), say $F$, then the lifetime of the system is represented by the \textit{order statistic} $X_{k:n}$, corresponding to a random sample of size $n$ from $F$.
For this reason, stochastic comparisons of order statistics represent a major issue in reliability theory. Engineering is typically concerned with choosing the system which may provide the best performance, according to some characteristics. Similarly to most decision problems in other research fields (e.g., economics, finance, etc.), the ``best'' performance of a $k$-out-of-$n$ system is generally understood as i) larger \textit{magnitude}, to be understood as the tendency of one random variable (RV) to take larger values, and ii) smaller risk or dispersion, since lifetime predictability is always preferable in such a context.
In reliability, the main stochastic orders used for comparisons of order statistics are \textit{likelihood ratio order, hazard rate order, first-order stochastic dominance} (FSD), with regard to \textit{magnitude} problems, Lillo~\cite{lillo2001}, Shaked and Shanthikumar~\cite{shaked2007} or Kochar~\cite{kochar2012}, and \textit{convex transform order, star order, Lorenz order, dispersive order}, Arnold and Villase\~{n}or~\cite{arnold1991}, Arnold and Nagaraja~\cite{arnold1991exp}, Wilfling~\cite{wilfling1996c}, Kochar~\cite{kochar2006}, Kochar and Xu~\cite{kochar2014} or Wu etal.~\cite{wu2020}, to deal with dispersion characterizations.
Recently, Lando and Bertoli-Barsotti~\cite{lando2019} considered the problem of ranking order statistics via \textit{second-order stochastic dominance} (SSD), that is the most widely used stochastic order in areas such as economics, finance, decision science and management. As well known, SSD, also referred to as generalized Lorenz dominance, is a scale-dependent version of the Lorenz order, which enables comparisons of RVs in terms of both magnitude and dispersion, therefore combining aspects i) and ii) into a single preorder. In this paper, we focus on the derivation of SSD for order statistics, dealing with both the one-sample (same parent distribution) and the two-sample problems (different parent distributions). For technical reasons, most results in the aforementioned literature about stochastic comparisons of order statistics are obtained by imposing restrictive constraints on the parent's distribution shape, or by focusing on particular parametric families of parent distributions.
However, restrictive shape assumptions are rather inconsistent with modern nonparametric statistical approach, in which the parent distribution is supposed to be unknown and has to be estimated from the data, with no prior constraint on its mathematical form. To address this issue, we propose a general method to derive SSD conditions for order statistics, according to different assumptions on the parent distribution.

Let $F$ and $H$ be continuous CDFs.
Renaming the definition of the convex transform order of van Zwet~\cite{zwet1964}, we shall say that $F$ is $H^{-1}$--\textit{convex} iff $H^{-1} \circ F$ is convex.
For the one-sample problem, we propose a method to compare, with respect to SSD, the order statistics $X_{i:n}$ and $X_{j:m}$ with common parent distribution $F$, by assuming that $F$ is $H^{-1}$--convex w.r.t. some suitable function $H^{-1}$. By focusing on four convenient choices of $H^{-1}$, we determine four partially nested classes of parent distributions that are relevant in terms of reliability properties, among which we can mention the well known \textit{increasing failure rate} (IFR) class Barlow et al.~\cite{barlow1963}. Correspondingly, we obtain four different sets of conditions implying the SSD between the order statistics, that are expressed in terms of the ranks $i$, $j$ and the sample sizes $m$, $n$. Moreover, this approach to finding conditions implying the SSD order may be extended to the two-sample problem by assuming that the two parent distributions are ordered w.r.t. a fractional-degree stochastic dominance relation recently introduced by Lando and Bertoli-Barsotti~\cite{landodist}. This method provides a flexible framework for SSD comparisons according to the information available on the parent distribution's shape. Finally, to draw such information from data, we propose statistical tests to evaluate whether the parent distribution is $H^{-1}$--convex.


\section{Preliminaries}
\noindent We consider absolutely continuous RVs with finite means. Let us begin with some notations. Let $X$ be an RV with CDF $F_X$ and probability density function (PDF) $f_X$. For any ordering relation $\succ$ we shall write $X\succ Y$ or $F_X \succ F_Y$ interchangeably. Let $X_{k:n}$ the $k$-th order statistic corresponding to an i.i.d. random sample of size $n$ from $X\sim F_X$. It is well known that the CDF of $X_{k:n}$ is given by  $F_{B} \circ F_X$, where $B\sim beta(k,n-k+1)$, Jones~\cite{jones2004}. Expressions are strongly simplified if we consider sample minima and maxima, whose CDFs reduce to $1-(1-F_X)^k$ and  $F_X^k$, respectively. But, in general, investigating stochastic orders between order statistics is quite complicated, owing to the number of parameters and non-closed functional forms. We recall the basic definitions of FSD and SSD.
\begin{definition}\label{FSD}
We say that $X$ dominates $Y$ w.r.t. FSD and we write $X \ge_1 Y$ iff $F_{X} (x)\le F_{Y} (x)$, for every $x \in \mathbb{R}$. Equivalently, $X \ge_1 Y$ iff $E(g(X))\geq E(g(Y))$ for every increasing function $g$.
\end{definition}
\begin{definition}
We say that $X$ dominates $Y$ w.r.t. SSD and we write $X \ge_2 Y$ iff $\int_{-\infty}^{x}{F_{X}(t)dt}\leq \int_{-\infty}^{x}{F_{Y}(t)dt}$, for every $x\in \mathbb{R}$. Equivalently, $X \ge_2 Y$ iff $E(g(X))\geq E(g(Y))$ for every increasing concave function $g$.
\end{definition}
Intuitively, FSD represents preference for the RV with larger magnitude, as Definition~\ref{FSD} says that $X$ is less likely than $Y$ to take values in any left tail. On the other hand, SSD represents preference for the RV with larger magnitude or smaller dispersion: in particular, if $X$ dominates $Y$ w.r.t. SSD then $E(X)\geq E(Y)$, and, in case of equality, $\mathrm{Var}(X)\leq \mathrm{Var}(Y)$ and $\mathrm{\Gamma}(X)\leq \mathrm{\Gamma}(Y)$, where $\mathrm{\Gamma}$ is the Gini coefficient, Fishburn~\cite{fishburn1980} or Muliere and Scarsini~\cite{muliere1989}.
As discussed in the literature, in many practical situations it is convenient to use orders that interpolate FSD and SSD, defining a family of fractional-degree dominance relations between these two. We refer to Fishburn~\cite{fishburn1976}, M\"uller adn Scarsini~\cite{muller2017} or Huang et al.~\cite{huang2020} and, in particular, to Lando and Bertoli-Barsotti~\cite{landodist}, who achieved this objective by comparing sample maxima through SSD for a fixed sample size, that is, $X_{k:k}\geq_2 Y_{k:k}$, for some positive integer $k$. For technical reasons, such an order proves useful in deriving SSD conditions for order statistics in the two-sample problem, as we show in Section 3. Moreover, \cite{landodist} proved that $X \ge_{1} Y$ iff $X_{k:k}\geq_2 Y_{k:k}$, for every positive integer $k$, (whereas, clearly, $X \ge_{2} Y$ iff $X_{1:1}\geq_2 Y_{1:1}$). In general, assuming $k\geq h\geq 1$, the following relations hold
\begin{equation}
X \geq_{1} Y\quad\Rightarrow\quad X_{k:k}\geq_2 Y_{k:k} \quad\Rightarrow\quad X_{h:h}\geq_2 Y_{h:h}\quad\Rightarrow\quad X \geq_{2} Y.
\end{equation}
The implications above mean that, if  $X\geq_2 Y$ is satisfied, the relation $X_{h:h}\geq_2 Y_{h:h}$ may hold for every $h$ up to some value $k$. Moreover, as larger values of $h$ correspond to stronger orders, this justifies the definition of a fractional-degree stochastic dominance relation, introduced in \cite{landodist},
\begin{equation}
X\geq_{1+\frac{1}{k}} Y \quad\Leftrightarrow\quad
k=\sup\{h\geq 1: X_{h:h}\geq_2 Y_{h:h}\}.
\end{equation}
For instance, $k=10$ gives $X\geq_{1.1} Y$, meaning that
$X_{h:h}\geq_2 Y_{h:h}$ for $h\leq10$, and $X_{h:h}\not\geq_2 Y_{h:h}$ for $h\geq 11$.

The conditions for SSD are particularly simple if CDFs are single-crossing or, alternatively, if PDFs are double-crossing.
Let us denote the number of sign changes of a function, $u$, defined on an interval, $I$, with
\begin{center}$S^{-}(u)=\sup{\{S^{-}[u(x_{1}),\ldots
,u(x_{\ell})], x_{1}<x_{2}<\ldots<x_{\ell}\}}$,
\end{center}
where $\ell<\infty$, $x_{i}\in I$, $i=1,\ldots,\ell$, and $S^{-}[y_{1},\ldots ,y_{\ell}]$ is the number of sign changes of
the sequence, $y_{1},\ldots ,y_{\ell}$, where the zero terms are
omitted \cite{shaked1982}.
Sufficient conditions for SSD can be derived as follows.
\begin{lemma}\label{LEM1}
If $S^{-}(F_X-F_Y)\le 1$ and the sign sequence
starts with $-$, then $X\ge _{2} Y$ iff $E(X)\ge
E(Y)$ (Hanoch and Levy~\cite{hanoch1969}).

If $S^{-}(f_X-f_Y)\le 2$ and the sign sequence
starts with $-$, then $X\ge _{2} Y$ iff $E(X)\ge
E(Y)$ (Ramos et al.~\cite{ramos2000}).
\end{lemma}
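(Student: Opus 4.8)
The plan is to prove both statements through the same skeleton: reduce the SSD requirement to a sign condition on a single integrated function, and then identify the binding boundary value of that integral with the difference of means.

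For the first statement, set $\Phi = F_X - F_Y$ and $\Psi(x) = \int_{-\infty}^x \Phi(t)\,dt$; by definition $X \ge_2 Y$ is equivalent to $\Psi(x) \le 0$ for all $x$. The hypothesis $S^{-}(\Phi) \le 1$ with leading sign $-$ means there is a point $x_0$ with $\Phi \le 0$ on $(-\infty, x_0)$ and $\Phi \ge 0$ on $(x_0, \infty)$. Hence $\Psi$ is nonincreasing on the first interval and nondecreasing on the second; since $\Psi(-\infty) = 0$, the function $\Psi$ is nonpositive to the left of $x_0$, while on the right it increases toward its limit at $+\infty$. Thus the condition $\Psi \le 0$ everywhere reduces to the single inequality $\Psi(\infty) \le 0$. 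The remaining step is the standard identity $\int_{-\infty}^\infty (F_Y - F_X)\,dt = E(X) - E(Y)$, valid for finite means, which gives $\Psi(\infty) = E(Y) - E(X)$; hence $X \ge_2 Y$ holds iff $E(X) \ge E(Y)$. The forward implication is in any case the general property of SSD recorded above, so the crossing hypothesis is needed only for the converse.

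For the second statement I would reduce to the first by a variation-diminishing argument. Write $\delta = f_X - f_Y$, so that $\Phi = F_X - F_Y = \int_{-\infty}^{\,\cdot} \delta$ with boundary conditions $\Phi(-\infty) = \Phi(\infty) = 0$, equivalently $\int_{-\infty}^\infty \delta = 0$. Under the hypothesis the sign pattern of $\delta$ is, up to degenerate collapses, $(-,+,-)$: there are points $a \le b$ with $\delta \le 0$ on $(-\infty, a)$, $\delta \ge 0$ on $(a,b)$, and $\delta \le 0$ on $(b, \infty)$. Consequently $\Phi$ decreases to $\Phi(a)$, increases to $\Phi(b)$, then decreases back to $0$; the boundary conditions give $\Phi(a) \le 0$ (since $\Phi$ decreases from $0$) and $\Phi(b) = -\int_b^\infty \delta \ge 0$. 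This forces $\Phi$ to be nonpositive up to a single crossing point and nonnegative thereafter, i.e.\ $S^{-}(\Phi) \le 1$ with leading sign $-$, and the claim follows from the first statement applied to $\Phi$.

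The only delicate point is the sign bookkeeping in this reduction: one must verify that the boundary conditions $\Phi(\pm\infty) = 0$ genuinely pin the leading sign of $\Phi$ to $-$ and lower the crossing count by one, and that the shorter patterns arising when a sign change is absent are handled correctly. In the pattern $(-)$ the constraint $\int_{-\infty}^\infty \delta = 0$ forces $\delta \equiv 0$; in the pattern $(-,+)$ the function $\Phi$ decreases and then increases back to $0$, so $\Phi \le 0$ throughout and $S^{-}(\Phi) = 0$. Once these cases are dispatched, both parts rest on the same monotonicity-plus-mean-identity argument and present no genuine analytic difficulty.
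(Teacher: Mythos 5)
Your proof is correct. Note, however, that the paper offers no proof of this lemma at all: it is stated as a known result, with the first part attributed to Hanoch and Levy~\cite{hanoch1969} and the second to Ramos et al.~\cite{ramos2000}. What you have written is a sound, self-contained reconstruction of the classical arguments. The first part is exactly the standard single-crossing argument: under the hypothesis, $\Psi(x)=\int_{-\infty}^x(F_X-F_Y)$ is nonincreasing then nondecreasing, starts at $0$, and so attains its supremum at $+\infty$, where the identity $\int_{-\infty}^{\infty}(F_Y-F_X)=E(X)-E(Y)$ (valid under the paper's standing finite-mean assumption) turns the SSD condition $\Psi\le 0$ into $E(X)\ge E(Y)$. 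Your reduction of the second part to the first --- integrating the density difference and using $\Phi(\pm\infty)=0$ to show that a $(-,+,-)$ pattern for $f_X-f_Y$ forces a $(-,+)$ pattern for $F_X-F_Y$ --- is precisely the variation-diminishing step that the paper itself invokes without proof inside the proof of Theorem~\ref{THM1} (``Consequently $S^{-}(F_{B_1}-F_{B_2})\leq 1$\ldots''), so your write-up in fact supplies a justification the paper leaves implicit in two places. Your handling of the degenerate patterns $(-)$ and $(-,+)$ is also the right bookkeeping, since in those cases $F_X\le F_Y$ pointwise and both sides of the equivalence hold automatically. The only stylistic caveat is that with the zero-omitting convention for $S^{-}$ the ``crossing point'' $x_0$ may be a whole interval, but this does not affect the monotonicity argument.
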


Throughout the paper we shall frequently use the following preservation result.
\begin{lemma}\label{LEM2}
Let $\phi$ be an increasing and concave function. If $X \geq_2 Y$ then $\phi(X) \geq_2 \phi(Y)$.
\end{lemma}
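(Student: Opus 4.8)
The plan is to use the equivalent characterization of SSD recorded in the definition, namely that $X \geq_2 Y$ if and only if $E(g(X)) \geq E(g(Y))$ for every increasing concave function $g$. Starting from the hypothesis $X \geq_2 Y$, I would aim to establish $\phi(X) \geq_2 \phi(Y)$ by verifying that $E(g(\phi(X))) \geq E(g(\phi(Y)))$ for an \emph{arbitrary} increasing concave function $g$, and then invoking the characterization in the reverse direction.

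The key observation is that the composition $g \circ \phi$ inherits both monotonicity and concavity from its factors. First I would check that $g \circ \phi$ is increasing, which is immediate since $\phi$ is increasing and $g$ is increasing. Then I would verify that $g \circ \phi$ is concave. The cleanest route uses the definitions directly: for $\lambda \in [0,1]$ and any points $x,y$, concavity of $\phi$ gives $\phi(\lambda x + (1-\lambda)y) \geq \lambda \phi(x) + (1-\lambda)\phi(y)$; applying the increasing function $g$ preserves this inequality, and then concavity of $g$ yields $g(\lambda\phi(x)+(1-\lambda)\phi(y)) \geq \lambda g(\phi(x)) + (1-\lambda)g(\phi(y))$. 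Chaining the two inequalities shows $g(\phi(\lambda x+(1-\lambda)y)) \geq \lambda g(\phi(x)) + (1-\lambda)g(\phi(y))$, which is precisely the concavity of $g\circ\phi$.

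Having established that $g \circ \phi$ is increasing and concave, I would apply the hypothesis $X \geq_2 Y$ with the test function $g\circ\phi$ to obtain $E(g(\phi(X))) = E((g\circ\phi)(X)) \geq E((g\circ\phi)(Y)) = E(g(\phi(Y)))$. Since $g$ was an arbitrary increasing concave function, the equivalent characterization of SSD then yields $\phi(X) \geq_2 \phi(Y)$, completing the argument.

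I do not expect a serious obstacle, as this is a standard preservation property. The only point requiring care is the concavity of the composition, which relies crucially on $g$ being \emph{increasing} and not merely concave: if $g$ were decreasing, applying it would reverse the inequality coming from concavity of $\phi$ and the chaining step would fail. This is exactly why the SSD characterization is phrased in terms of increasing concave test functions, and it is what makes the preservation work.
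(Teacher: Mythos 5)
Your proof is correct and follows essentially the same route as the paper: both arguments verify that $g \circ \phi$ is increasing and concave for every increasing concave test function $g$ and then invoke the expectation characterization of SSD. Your write-up merely spells out the concavity of the composition in more detail, which the paper takes as immediate.
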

\begin{proof}
For every increasing concave function $g$, the functional composition $g \!\circ\! \phi$ is increasing concave as well. Then, $X \geq_2 Y$ implies $E(g \!\circ\! \phi(X))\geq E(g \!\circ\! \phi (Y))$, for every increasing concave $g$, that is, $\phi(X) \geq_2 \phi(Y)$, by characterization of SSD.
\end{proof}

We shall also need the following orders. The first one is due to Chan et al.~\cite{chan1990}, whereas the latter is ascribable to van Zwet~\cite{zwet1964}.
\begin{definition}
\label{def:twoorders}
Let $H,G$ be a pair of CDFs.
\begin{enumerate}
\item
Let $H$ be absolutely continuous w.r.t. $G$. We say that $H$ is \textit{more convex}
than $G$ and write $H \geq^*_c G$ iff $H \!\circ\! G^{-1}$ is convex.
\item
We say that $H$ dominates $G$ w.r.t. the \textit{convex transform order} and write $H \geq_c G$ iff $H^{-1} \!\circ\! G$ is convex on the support of $G$ or, equivalently, $G$ is $H^{-1}$--convex.
\end{enumerate}
\end{definition}
In fact, the notion of being more convex mentioned above is a translation into a geometrical interpretation of the relative convexity introduced by Hardy et al.~\cite{HLP52}.
The two orders in Definition~\ref{def:twoorders} are different but closely related: if $H$ and $G$ are defined on the unit interval, $H \geq_c F$ iff $H^{-1} \geq^*_c G^{-1}$. In turn, it can be easily seen that $>^*_c$ is equivalent to the likelihood ratio order, Chan et al.~\cite{chan1990}. Moreover, if $F_X\geq_c F_Y$, then Y is said to be a \textit{convex transform} of $X$, since $F_Y^{-1}\!\circ\! F_X(X)$ has the same distribution as $Y$.

Finally, we present two technical lemmas which will be useful in the next section.
\begin{lemma}\label{LEM3}
Let $B_1\sim beta(a_1,b_1)$ and $B_2\sim beta(a_2,b_2)$.
\begin{enumerate}
\item If $a_1\geq a_2$ and $b_1\leq b_2$ then $\frac{f_{B_1}(x)}{f_{B_2}(x)}$ is increasing.
\item If $a_1\geq a_2$ then $S^{-}(f_{B_1}-f_{B_2})\leq 2$, where the sign sequence starts with $-$.
\end{enumerate}
\end{lemma}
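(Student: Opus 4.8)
The plan is to reduce both parts to the behaviour of the likelihood ratio of the two densities. Writing the beta densities as $f_{B_i}(x)=x^{a_i-1}(1-x)^{b_i-1}/B(a_i,b_i)$ on $(0,1)$, the normalising constants cancel in the quotient, leaving
\[
g(x):=\frac{f_{B_1}(x)}{f_{B_2}(x)}=C\,x^{\,a_1-a_2}(1-x)^{\,b_1-b_2},\qquad x\in(0,1),
\]
for a positive constant $C$. Setting $\alpha=a_1-a_2$ and $\beta=b_1-b_2$, everything will be read off from $\log g$.

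For part (1) I would simply differentiate, obtaining $(\log g)'(x)=\alpha/x-\beta/(1-x)$. Under the hypotheses $\alpha\ge 0$ and $\beta\le 0$ both terms are nonnegative throughout $(0,1)$, so $\log g$, and hence $g$, is increasing. This is the easy, direct part, and it essentially recovers the likelihood ratio ordering inside the beta family.

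For part (2) the starting point is that $f_{B_1}(x)-f_{B_2}(x)$ has, pointwise, the same sign as $g(x)-1$; hence $S^{-}(f_{B_1}-f_{B_2})$ equals the number of level-$1$ crossings of $g$. To bound these I would again use $\log g$: a second differentiation gives $(\log g)''(x)=-\alpha/x^{2}-\beta/(1-x)^{2}$, which is strictly negative on $(0,1)$ as soon as $\beta\ge 0$ (with $\alpha\ge 0$ and not both zero), so $\log g$ is strictly concave and can attain the value $0$ at most twice; if instead $\beta<0$, then part (1) already makes $g$ monotone, giving at most one crossing. Either way $S^{-}(f_{B_1}-f_{B_2})\le 2$.

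The hard part will be establishing that the sign sequence \emph{opens} with $-$, which I expect to be the only real subtlety. When $\alpha>0$ it is immediate: the factor $x^{\alpha}$ forces $g(x)\to 0$ as $x\to 0^{+}$, so $f_{B_1}<f_{B_2}$ near the left endpoint and the first sign is $-$. The borderline case $a_1=a_2$ is more delicate, since then $g(0^{+})=C>0$ is finite and the initial sign is no longer dictated by a vanishing factor; here I would argue from the monotonicity of $g$ furnished by part (1), combined with the normalisation $\int_{0}^{1}(f_{B_1}-f_{B_2})\,dx=0$, which rules out a single-signed difference and thereby locates the crossing. Isolating and handling this degenerate configuration is, to my mind, the main obstacle, the bound $S^{-}\le 2$ itself following transparently from the log-concavity of the ratio.
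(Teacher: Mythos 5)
Your part (1) and your bound $S^{-}(f_{B_1}-f_{B_2})\le 2$ are correct, and the route is essentially the paper's: the paper differentiates $\ell=f_{B_1}/f_{B_2}$ directly and reads everything off the linear factor $(a_2-a_1+b_2-b_1)x+a_1-a_2$, concluding that $\ell$ is increasing in case (1) and increasing or increasing-then-decreasing in case (2), whence $S^{-}(\ell-1)\le 2$. Your logarithmic differentiation, with strict concavity of $\log g$ when $b_1\ge b_2$ and the case $b_1<b_2$ absorbed into part (1), is an equivalent repackaging of the same unimodality statement and is fine.

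The genuine issue is the one you yourself flag: the opening sign when $a_1=a_2$. Your proposed repair invokes the monotonicity supplied by part (1), but part (1) assumes $b_1\le b_2$, which is not among the hypotheses of part (2); and in the remaining sub-case $a_1=a_2$, $b_1>b_2$ the claim is simply false. Take $B_1\sim beta(1,2)$ and $B_2\sim beta(1,1)$: then $f_{B_1}(x)-f_{B_2}(x)=1-2x$, whose sign sequence is $+,-$. So no argument can close that case; the statement needs either $a_1>a_2$, or the extra assumption $b_1\le b_2$ when $a_1=a_2$. The paper's own proof has the same blind spot: it asserts that the sign sequence of $\ell'$ starts with $+$ because the linear factor equals $a_1-a_2\ge 0$ at $x=0$, but when $a_1=a_2$ that value is $0$ and for $x>0$ the factor carries the sign of $b_2-b_1$, which is negative here. (In the application to Theorem~\ref{THM1} the offending configuration $i=j$, $n>m$ is excluded by the hypothesis $E(H^{-1}\!\circ\! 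B_1)\ge E(H^{-1}\!\circ\! B_2)$, so nothing downstream is affected.) Apart from this degenerate configuration, which you correctly identified as the only real subtlety and which cannot be repaired as stated, your write-up is sound: for $a_1>a_2$ the observation that $g(0^{+})=0$ settles the initial sign cleanly.
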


\begin{proof}
Let $\ell(x)=\frac{f_{B_1}(x)}{f_{B_2}(x)}=\beta x^{a_1-a_2}(1-x)^{b_1-b_2}$, where $\beta=\frac{B(a_2,b_2)}{B(a_1,b_1)}$. Differentiating we get $\ell'(x)= \beta x^{a_1-a_2-1}(1-x)^{b_1-b_2-1}((a_2-a_1+b_2-b_1)x+a_1-a_2)$. Thus, $S^{-}(\ell'(x))=S^{-}((a_2-a_1+b_2-b_1)x+a_1-a_2).$
\begin{enumerate}
\item
  If $a_1\geq a_2$ and $b_1\leq b_2$, then, $\ell'(x)\geq 0$, meaning that $\ell$ is increasing.
\item
If $a_1\geq a_2$, then $S^{-}(\ell'(x))\leq 1$ and the sign sequence starts with +, meaning that $\ell$ is either increasing or increasing and then decreasing. Consequently, the conclusion follows from $S^{-}(f_{B_1}-f_{B_2})=S^{-}(\ell-1)$.
\end{enumerate}
\end{proof}

\begin{lemma}\label{P2}
Let $H^{-1}$ be a quantile function and $B_{i,n}\sim beta(i,n-i+1)$, where 
$1\leq i \leq n$.
\begin{enumerate}
\item If $H^{-1}(p)=p$ then $E(H^{-1} \!\circ\! B_{i,n})=\frac{i}{n+1}$.
\item If $H^{-1}(p)=\log{\frac{p}{1-p}}$ then $E(H^{-1} \!\circ\! B_{i,n})=\psi(i) - \psi(n-i + 1)$, where $\psi$ is the digamma function.
\item If $H^{-1}(p)=-\log{(1-p)}$ then $E(H^{-1} \!\circ\! B_{i,n})=\sum_{k=n-i+1}^{n}{\frac{1}{k}}$.
\item If $H^{-1}(p)=\frac{p}{1-p}$ then $E(H^{-1} \!\circ\! B_{i,n})=\frac{i}{n-i}$.
\end{enumerate}
\end{lemma}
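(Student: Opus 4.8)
The plan is to express each quantity as
$E(H^{-1}(B_{i,n})) = \frac{1}{B(i,n-i+1)} \int_0^1 H^{-1}(x)\, x^{i-1}(1-x)^{n-i}\, dx$
and to evaluate the resulting integral by recognising standard Beta-function and digamma identities. The four cases split naturally into two groups: the algebraic transforms (parts 1 and 4), handled by direct integration, and the logarithmic transforms (parts 2 and 3), handled by differentiating the Beta integral in its parameters.

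For parts 1 and 4, I would observe that multiplying the density by $H^{-1}$ again produces, up to the normalising constant, the kernel of a Beta density with shifted parameters. In part 1, $x \cdot x^{i-1}(1-x)^{n-i} = x^i(1-x)^{n-i}$ integrates to $B(i+1,n-i+1)$, so the expectation equals $B(i+1,n-i+1)/B(i,n-i+1)$, which reduces to $i/(n+1)$ after cancelling Gamma factors; this is simply the mean of a Beta variable. In part 4, $\frac{x}{1-x}\,x^{i-1}(1-x)^{n-i} = x^i(1-x)^{n-i-1}$ integrates to $B(i+1,n-i)$, and the ratio $B(i+1,n-i)/B(i,n-i+1)$ collapses to $i/(n-i)$; here I would remark that the integral converges precisely when $n-i \geq 1$, consistently with the value diverging as $i \to n$.

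For parts 2 and 3, the key device is to differentiate the identity $\int_0^1 x^{a-1}(1-x)^{b-1}\,dx = B(a,b)$ with respect to a parameter. Differentiating in $a$ and dividing by $B(a,b)$ yields $E(\log B_{a,b}) = \psi(a) - \psi(a+b)$, and symmetrically $E(\log(1-B_{a,b})) = \psi(b) - \psi(a+b)$, using $\partial_a \log B(a,b) = \psi(a) - \psi(a+b)$. Specialising to $(a,b) = (i,n-i+1)$: for part 2, $H^{-1}(p) = \log p - \log(1-p)$, so the expectation is $[\psi(i)-\psi(n+1)] - [\psi(n-i+1)-\psi(n+1)] = \psi(i) - \psi(n-i+1)$; for part 3, $H^{-1}(p) = -\log(1-p)$ gives $\psi(n+1) - \psi(n-i+1)$, which via the integer identity $\psi(m+1) = -\gamma + \sum_{k=1}^m \frac{1}{k}$ telescopes to $\sum_{k=n-i+1}^n \frac{1}{k}$.

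The calculations are essentially routine; the only step that genuinely needs justification is the interchange of differentiation and integration used to obtain the logarithmic moments, which follows from dominated convergence once one checks the integrability of $\log x\, x^{i-1}(1-x)^{n-i}$ near $x=0$ (valid for $i\geq 1$) and of $(1-x)^{n-i}\log(1-x)$ near $x=1$ (valid for $n\geq i$). Beyond this, the work is bookkeeping with factorials and the dictionary between the digamma function at integers and harmonic sums; I anticipate no real obstacle.
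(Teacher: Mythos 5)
Your proposal is correct, and all four computations check out: the ratios of Beta functions in parts 1 and 4 reduce as you claim, and the digamma identities $E(\log B_{a,b})=\psi(a)-\psi(a+b)$ and $E(\log(1-B_{a,b}))=\psi(b)-\psi(a+b)$ specialised at $(a,b)=(i,n-i+1)$ give exactly the stated formulas, with the harmonic-sum form in part 3 following from $\psi(m+1)=-\gamma+\sum_{k=1}^{m}\frac1k$. The route you take is, however, more self-contained than the paper's: the authors dispose of the lemma in three lines, declaring part 1 trivial, citing Birnbaum--Dudman for the logit moment in part 2 and Arnold--Nagaraja for the exponential case in part 3, and noting that in part 4 the transformed variable is a Beta distribution of the second kind whose mean is standard. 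What your derivation buys is a uniform mechanism --- shifted Beta kernels for the algebraic transforms, differentiation of $B(a,b)$ in its parameters for the logarithmic ones --- that proves all four statements at once without external references, and it makes visible the convergence constraint $i<n$ in part 4 that the paper leaves implicit. The only point you flag as needing care, the interchange of differentiation and integration, is indeed routine under the domination you describe, so there is no gap.
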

\begin{proof}
Expression (1) is trivial. Expression (2) can be found in Birnbaum and Dudman~\cite{birnbaum}. Expression (3) is given in Arnold and Nagaraja~\cite{arnold1991exp}. As for (4), $H^{-1} \!\circ\! B_{i,n}$ is a beta distribution of the second type and the expression of the mean is straightforward.
\end{proof}

\section{Main results}
In this section, we enable SSD comparisons of order statistics based on different decompositions of the CDF of $X_{k:n}$, where the decomposition chosen determines the range of application of the corresponding SSD conditions, as we establish in the following theorems.

\begin{theorem}\label{THM1}
\noindent Let $X \sim F$, $B_1\sim beta(i,n-i+1)$ and $B_2\sim beta(j,m-j+1)$.  Let $H$ be a CDF. If $F$ is $H^{-1}$--convex, $i\geq j$ and $E(H^{-1} \!\circ\! B_1)\geq E(H^{-1} \!\circ\! B_2)$, then $X_{i:n}{\ge }_2 X_{j:m}$.
\end{theorem}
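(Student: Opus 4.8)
The plan is to reduce the statement to a second-order comparison between the transformed beta variables $H^{-1}(B_1)$ and $H^{-1}(B_2)$, and then to transport that comparison to the order statistics through an increasing concave map. First I would recall that $X_{i:n} \stackrel{d}{=} F^{-1}(B_1)$ and $X_{j:m} \stackrel{d}{=} F^{-1}(B_2)$, which follows from the representation of the CDF of an order statistic as $F_B \circ F$ noted in the preliminaries. Writing $\phi = F^{-1} \circ H$, the hypothesis that $F$ is $H^{-1}$--convex, i.e. that $H^{-1} \circ F$ is convex and increasing, gives that its inverse $\phi = (H^{-1} \circ F)^{-1} = F^{-1} \circ H$ is increasing and concave. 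Since $\phi(H^{-1}(B_\ell)) = F^{-1}(H(H^{-1}(B_\ell))) = F^{-1}(B_\ell)$ for $\ell = 1,2$, Lemma~\ref{LEM2} will reduce the goal $X_{i:n} \geq_2 X_{j:m}$ to establishing $H^{-1}(B_1) \geq_2 H^{-1}(B_2)$.

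To establish this latter relation I would invoke the double-crossing criterion in the second part of Lemma~\ref{LEM1}. Denoting by $h$ the density of $H$, the densities of $H^{-1}(B_1)$ and $H^{-1}(B_2)$ are $f_{B_1}(H(y))\,h(y)$ and $f_{B_2}(H(y))\,h(y)$, so their difference equals $h(y)\bigl(f_{B_1}(H(y)) - f_{B_2}(H(y))\bigr)$. Because $h > 0$ and $H$ is an increasing bijection onto $(0,1)$, the number of sign changes of this difference in $y$ coincides with $S^{-}(f_{B_1} - f_{B_2})$ in the variable $x = H(y)$, and the leading sign is preserved. Since $i \geq j$, Lemma~\ref{LEM3}(2) yields $S^{-}(f_{B_1} - f_{B_2}) \leq 2$ with sign sequence starting with $-$; hence the density difference of $H^{-1}(B_1)$ and $H^{-1}(B_2)$ has at most two sign changes, beginning with $-$. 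By Lemma~\ref{LEM1}, it follows that $H^{-1}(B_1) \geq_2 H^{-1}(B_2)$ holds iff $E(H^{-1}(B_1)) \geq E(H^{-1}(B_2))$, which is precisely the assumption $E(H^{-1} \!\circ\! B_1) \geq E(H^{-1} \!\circ\! B_2)$.

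Combining the two steps, Lemma~\ref{LEM2} applied to the increasing concave map $\phi$ gives $\phi(H^{-1}(B_1)) \geq_2 \phi(H^{-1}(B_2))$, that is $F^{-1}(B_1) \geq_2 F^{-1}(B_2)$, which is the desired $X_{i:n} \geq_2 X_{j:m}$. The step I expect to require the most care is the reduction through $\phi$: one must verify that convexity of $H^{-1} \circ F$ transfers legitimately to concavity of $F^{-1} \circ H$ on the relevant range, taking due care with generalized inverses when $F$ or $H$ fail to be strictly increasing, and that the distributional identities $\phi(H^{-1}(B_\ell)) \stackrel{d}{=} X_{i:n}, X_{j:m}$ hold exactly. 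Once this is settled, the sign-change bookkeeping of the middle step is a routine change of variable, and the expectation hypothesis enters verbatim.
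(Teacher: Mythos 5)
Your proof is correct and follows essentially the same route as the paper's: reduce the claim to $H^{-1}\!\circ\! B_1 \geq_2 H^{-1}\!\circ\! B_2$ using the crossing criteria of Lemma~\ref{LEM1} together with Lemma~\ref{LEM3}, then transport the dominance through the increasing concave map $F^{-1}\!\circ\! H$ via Lemma~\ref{LEM2}. The only (immaterial) difference is that you apply the density double-crossing criterion directly to the transformed variables after a change of variable, whereas the paper first converts the double crossing of $f_{B_1}-f_{B_2}$ into a single crossing of the CDFs and then composes with $H$ --- a variant that has the slight advantage of not requiring $H$ to possess a positive density.
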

\begin{proof}
From Lemma~\ref{LEM3}, it follows that $i\geq j$ implies $S^{-}(f_{B_1}-f_{B_2})\le 2$, where the sign sequence starts with $-$. Consequently $S^{-}(F_{B_1}-F_{B_2})\leq 1$, where the sign sequence starts also with $-$ \cite{ramos2000}. Similarly, since $H$ is increasing we have $S^{-}(F_{B_1}\!\circ\! H-F_{B_2}\!\circ\! H)\le 1$, where $F_{B_i}\!\circ\! H$ is the CDF of $H^{-1} \!\circ\! B_i$, for $i=1,2$. Taking into account that $i\geq j$ and $E(H^{-1} \!\circ\! B_1)\geq E(H^{-1} \!\circ\! B_2)$, it follows from Lemma~\ref{LEM1} that $H^{-1} \!\circ\! B_1\geq_2 H^{-1} \!\circ\! B_2$. Since $H^{-1} \!\circ\! F$ is convex, $F^{-1} \!\circ\! H$ is concave, Lemma~\ref{LEM2} yields
\begin{equation}\label{decomposition}
F^{-1} \!\circ\! H \!\circ\! H^{-1} \!\circ\! B_1 \geq_2 F^{-1} \!\circ\! H \!\circ\! H^{-1} \!\circ\! B_2,
\end{equation}
where the RVs $F^{-1} \!\circ\! B_i$, for $i=1,2$, have CDFs $F_{B_i} \!\circ\! F$, meaning that $X_{i:n}{\ge }_2 X_{j:m}$.
\end{proof}

As different choices of $H$ may lead to different conditions for SSD, it is natural to wonder whether ordered choices of $H$ (in the sense of $\geq_c$) yield ordered (i.e., stronger/weaker) SSD conditions. The next theorem answers to this question.

\begin{theorem}\label{THM2}
Let $X \sim F$, $B_1\sim beta(i,n-i+1)$ and $B_2\sim beta(j,m-j+1)$, where $i\geq j$. Let $H,G$ be a pair of CDFs such that $H \geq_c G$.
\begin{enumerate}
\item $F$ is $G^{-1}$--convex $\quad\Rightarrow\quad$ $F$ is $H^{-1}$--convex.
\item $E(H^{-1} \!\circ\! B_1)\geq E(H^{-1} \!\circ\! B_2)$ $\quad\Rightarrow\quad$ $E(G^{-1} \!\circ\! B_1)\geq E(G^{-1} \!\circ\! B_2)$.
\end{enumerate}
\end{theorem}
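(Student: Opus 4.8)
The plan is to handle both parts by reducing them to the composition rules for convex/concave functions, exploiting two telescoping factorizations that insert the identity $G \!\circ\! G^{-1}$ (respectively $H^{-1}\!\circ\! H$) between the relevant maps.

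For part (1), I would write $H^{-1} \!\circ\! F = (H^{-1} \!\circ\! G) \!\circ\! (G^{-1} \!\circ\! F)$. The hypothesis $H \geq_c G$ makes the outer factor $H^{-1} \!\circ\! G$ convex, and since it is a composition of increasing functions it is also increasing; the hypothesis that $F$ is $G^{-1}$--convex makes the inner factor $G^{-1} \!\circ\! F$ convex. Then I would invoke the elementary fact that the composition of an increasing convex function with a convex function is convex, concluding that $H^{-1} \!\circ\! F$ is convex, that is, $F$ is $H^{-1}$--convex.

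For part (2), I would first recover, exactly as in the proof of Theorem~\ref{THM1}, the intermediate relation $H^{-1} \!\circ\! B_1 \geq_2 H^{-1} \!\circ\! B_2$: since $i \geq j$, Lemma~\ref{LEM3} gives $S^{-}(f_{B_1}-f_{B_2}) \leq 2$ with sign sequence starting with $-$, hence $S^{-}(F_{B_1}\!\circ\! H - F_{B_2}\!\circ\! H) \leq 1$ starting with $-$; combined with the assumption $E(H^{-1}\!\circ\! B_1) \geq E(H^{-1}\!\circ\! B_2)$, Lemma~\ref{LEM1} yields the claimed SSD. Next, since $H \geq_c G$ the map $H^{-1}\!\circ\! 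G$ is increasing and convex, so its inverse $G^{-1}\!\circ\! H$ is increasing and concave. Writing $G^{-1}\!\circ\! B_i = (G^{-1}\!\circ\! H)\!\circ\!(H^{-1}\!\circ\! B_i)$ and applying Lemma~\ref{LEM2} to the increasing concave function $G^{-1}\!\circ\! H$ gives $G^{-1}\!\circ\! B_1 \geq_2 G^{-1}\!\circ\! B_2$. Finally, because SSD dominance forces the ordering of means, I conclude $E(G^{-1}\!\circ\! B_1) \geq E(G^{-1}\!\circ\! B_2)$.

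I do not expect a genuine obstacle here: both parts collapse once the two factorizations are spotted. The only point demanding care is the bookkeeping of the convexity/concavity directions -- in particular, correctly asserting that the inverse of an increasing convex function is increasing concave -- together with the observation that part (2) reuses the entire first half of the argument of Theorem~\ref{THM1} verbatim, so that the real content is just the final transfer through the increasing concave map $G^{-1}\!\circ\! H$.
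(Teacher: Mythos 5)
Your proposal is correct and follows essentially the same route as the paper: part (1) via the factorization $H^{-1}\!\circ\! F=(H^{-1}\!\circ\! G)\!\circ\!(G^{-1}\!\circ\! F)$ and closure of convexity under composition (i.e., transitivity of $\geq_c$), and part (2) by rederiving $H^{-1}\!\circ\! B_1\geq_2 H^{-1}\!\circ\! B_2$ from the single-crossing argument and then pushing it through the increasing concave map $G^{-1}\!\circ\! H$ using Lemma~\ref{LEM2}. Your explicit remarks that the outer factor must also be increasing, and that the inverse of an increasing convex map is increasing concave, are details the paper leaves implicit but do not change the argument.
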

\begin{proof}
\begin{enumerate}
\item
The implication is a straightforward consequence of the transitivity of the convex transform order $\geq_c$. In fact, since  $H^{-1} \!\circ\! G$ and $G^{-1} \!\circ\! F$ are convex, also the composition $H^{-1} \!\circ\! G\!\circ\! G^{-1} \!\circ\! F=H^{-1}\!\circ\! F$ is convex.
\item
The implication follows from the single-crossing argument of Lemma~\ref{LEM1}, that is, $i \geq j$ and $E(H^{-1} \!\circ\! B_1)\geq E(H^{-1} \!\circ\! B_2)$ imply $H^{-1} \!\circ\! B_1\geq_2 H^{-1} \!\circ\! B_2$. Taking into account that $G^{-1} \!\circ\! H$ is concave, Lemma~\ref{LEM2} yields $G^{-1} \!\circ\! B_1\geq_2 G^{-1} \!\circ\! B_2$ and, in particular, $E(G^{-1} \!\circ\! B_1)\geq E(G^{-1} \!\circ\! B_2)$.
\end{enumerate}
\end{proof}

In spite of its simplicity, Theorem~\ref{THM2} provides many useful ways to apply Theorem~\ref{THM1}. The general concept can be summarized as follows. Ideally, we wish to be able to compare as many pairs $(X_{i:n}$,$X_{j:m})$ as possible, according to our knowledge of the parent distribution. If only partial information about the $F$ is available, we can check its $H^{-1}$--convexity w.r.t. some suitable $H^{-1}$ and then apply Theorem~\ref{THM2}. Clearly, the more convex $H^{-1}$ is (in the sense of Definition~\ref{def:twoorders}), the more likely we can apply the method, in that we might choose $H$ so that basically every distribution satisfies $H \geq_c F$. On the other hand, choosing an $H^{-1}$ with a higher degree of convexity yields weaker conditions on the parent distribution but, at the same time, stronger conditions on $i,j,n,m$. In other words, the larger the set of comparable families, the smaller the set of comparable order statistics, and vice versa. As a limiting case, we may consider the situation in which $F$ is assumed to be known, which is clearly the most restrictive condition. However, in this case we can directly compute the expectations of the order statistics (at least numerically) and check whether $i \geq j$ and $E(X_{i:n}){\geq} E(X_{j:m})$: this enables us to rank the largest possible set of pairs $(X_{i:n}$,$X_{j:m})$.

We focus on four partially ordered choices of $H$.
\begin{enumerate}
\item Uniform (trivial case): $H(x)=x$, for $x\in[0,1]$, and $H^{-1}(p)=p$, for $p\in[0,1]$.
\item Logistic: $H(x)=\frac{1}{1 + \mathrm{e}^{-x}}$, for $x\in(-\infty,\infty)$, and $H^{-1}(p)=\log{\frac{p}{1-p}}$, for $p\in(0,1)$ (log-odds, or logit function).
\item Exponential: $H(x)=1 - \mathrm{e}^{-x}$, for $x\in[0,\infty)$, and $H^{-1}(p)=-\log{(1-p)}$, for $p\in[0,1)$.
\item Log-logistic: $H(x)=\frac{x}{1+x}$, for $x\in[0,\infty)$, and $H^{-1}(p)={\frac{p}{1-p}}$, for $p\in[0,1)$ (odds function).
\end{enumerate}
Correspondingly, the sets of $H^{-1}$--convex distributions, that is, those satisfying $H\geq_c F$, determine four classes, defined as follows.
\begin{definition}\label{classes} Let $F$ be a CDF. We say that:
\begin{enumerate}
\item $F\in \mathcal{F}_C$ iff $F$ is convex.
\item $F\in \mathcal{F}_{CL}$ iff $F$ is \textit{logit--convex}, i.e., $\log\frac{F}{1-F}$ is convex.
\item $F\in \mathcal{F}_{IFR}$ iff $-\log(1-F)$ is convex.
\item $F\in \mathcal{F}_{CO}$ iff $F$ is \textit{odds--convex}, i.e., $\frac{F}{1-F}$ is convex.
\end{enumerate}
\end{definition}

Let $X\sim F$ represent the failure time. All the four classes defined above are of interest in terms of reliability properties:
\begin{description}
\item[$\mathcal{F}_{IFR}$] is referred to as the IFR class as $-\log(1-F)$, namely, the \textit{hazard function} of $F$, is convex iff $r(x)=\frac{f(x)}{1-F(x)}=\lim_{\Delta x \rightarrow 0} {\frac{P(X\in(x,x+\Delta x]|X>x)}{\Delta x}} $, namely, the \textit{failure rate} of $F$, is increasing. $\mathcal{F}_{IFR}$ is an important class in reliability theory and contains many relevant models (see Shaked and Shanthikumar~\cite{shaked2007} and references therein).
\item[$\mathcal{F}_C$] contains just distributions with bounded support, as $F\in \mathcal{F}_C$ iff the corresponding PDF, $f$, is increasing. Moreover, the probability of failure within a fixed-width interval increases with time, i.e., the function $P(X\in(x,x+\Delta])$ is increasing in $x$ for every positive $\Delta$.
\item[$\mathcal{F}_{CL}$] contains distributions with unbounded support of the form $(-\infty, c)$, where $c\leq +\infty$ (as the limit of logit function at $0$ is $-\infty$). $F\in \mathcal{F}_{CL}$ iff the \textit{log-odds rate}, that is, the ratio between the failure rate and the CDF, $\frac{r}{F}$, is increasing. Equivalently, note that $\frac{r}{F}=r+r^*$, where $r^*=\frac{f}{F}$ is the \textit{reversed failure rate} of $F$. This class has been studied by Zimmer et al.~\cite{zimmer1998}, Wang et al.~\cite{wang2003}, Sankaran and Jayakumar~\cite{sankaran2008} or Navarro et al.~\cite{navarro2008}.
\item[$\mathcal{F}_{CO}$] is characterized by the convexity of the odds for failure, that is, the ratio between failure and survival probability, where it can be seen that
\begin{center}
$\frac{F(x)}{1-F(x)}=\frac{P(X\in(x,x+\Delta x]|X>x)}{P(X\in(x-\Delta x,x]|X\leq x)},$
\end{center}
for arbitrarily small $\Delta x>0$, Kirmani and Gupta~\cite{kirmani2001}. Equivalently, $F\in \mathcal{F}_{CO}$ iff the ratio between the failure rate and the survival function, $\frac{r}{1-F}$, is increasing. As shown in the sequel, $\mathcal{F}_{CO}$ is the widest class considered, in that it contains all the classes above and also some heavy tailed families.
\end{description}
The relations between these four classes can be derived straightforwardly. First, it can be readily seen that $F$ convex implies $-\log{(1-F)}$ convex. In turn, $\frac{r}{F}$ increasing implies $r$ increasing. Finally, $r$ increasing implies $\frac{r}{1-F}$ increasing. This can be summarized as follows
\begin{equation}\label{class}
\mathcal{F}_C\subset\mathcal{F}_{IFR}\subset\mathcal{F}_{CO}\qquad\text{and}\qquad \mathcal{F}_{CL}\subset\mathcal{F}_{IFR}\subset\mathcal{F}_{CO},
\end{equation}
whereas clearly $\mathcal{F}_C\cap \mathcal{F}_{CL}=\emptyset$, because of the different support assumptions.
The classifications of some basic models are given it Table 1.

\begin{table}[h]
\caption{Different popular distributions and corresponding convexity conditions.}
{\scriptsize\begin{tabular}{|l|l|c|c|c|c|c|c|} \hline
\textbf{Distribution} &\multicolumn{1}{c|}{\textbf{CDF}}&\textbf{Parameters}&\textbf{Support}& $\boldsymbol{\mathcal{F}_C}$  & $\boldsymbol{\mathcal{F}_{CL}}$ & $\boldsymbol{\mathcal{F}_{IFR}}$&$\boldsymbol{\mathcal{F}_{CO}}$ \\ \hline
Uniform & $\displaystyle\frac{x-a}{b-a}$&$b>a$ &$[a,b]$& yes&no&yes&yes  \\ \hline
Power function &$\displaystyle{\left(\frac{x}{b}\right)}^a$&$a,b>0$&$[0,b]$& yes &no&yes &yes \\
 & & & & ($a\geq 1$)& & ($a\geq 1$)&($a\geq 1$) \\ \hline
Logistic & $\displaystyle\frac{\mathrm{1}}{\mathrm{1+}{\mathrm{exp} \left(\frac{\mu -x}{\sigma }\right)\ }}$&$\sigma >0$&$\mathbb{R}$ & no&yes&yes&yes \\ \hline
Gumbel& $\displaystyle1-\exp{(-\mathrm{e}^{\frac{x-\mu}{\sigma }} )}$&$\sigma >0$&$\mathbb{R}$ & no&yes&yes&yes \\ \hline

Exponential &  $\displaystyle1-\mathrm{e}^{-ax}$&$a>0$&$[0,\infty)$ & no&no&yes&yes  \\ \hline
Normal & $\displaystyle\frac{1}{2}\mathrm{erf}\left(\frac{x-\mu}{\sqrt{2}\sigma} \right)$ & $\sigma >0$&$\mathbb{R}$ & no&no&yes&yes  \\ \hline
Beta& $\displaystyle\int_0^x{\frac{(1 - t)^{ b-1} t^{a-1}}{B(a,b)}}dt$&$a,b>0$&$[0,1]$ & yes &no&yes &yes \\
 & & & & ($a,b\leq 1$) & & ($a\geq1$) & ($a\geq 1$)\\
\hline
Gamma& $\displaystyle\int_0^x{\frac{\mathrm{e}^{-x/b} x^{a-1}}{b^{a} \Gamma(a)}}dt$&$a,b>0$&$[0,\infty)$ & no&no&yes &yes \\
 & & & & & & ($a\geq1$)& ($a\geq1$)\\ \hline
Weibull & $\displaystyle1-\exp{-(\frac{x}{b})^{a}}$&$a,b>0$&$[0,\infty)$& no&no&yes &yes \\
 & & & & & & ($a\geq1$)& ($a\geq1$)\\\hline
Cauchy & $\displaystyle\frac{1}{2}+\frac{1}{\pi}\arctan{\frac{x-\mu}{\sigma }}$ & $\sigma>0$&$\mathbb{R}$ & no&no&no&yes \\ \hline
Lognormal & $\displaystyle\frac{1}{2}\mathrm{erf}\left(\frac{\log{x}-\mu}{\sqrt{2}\sigma} \right)$ & $\sigma>0$&$[0,\infty) $& no&no&no&yes \\ \hline
Log-logistic & $\displaystyle\frac{1}{1+{(x/b)}^{-a}}$&$a,b>0$&$(0,\infty)$ & no&no&no&yes \\
 & & & & & & & ($a\geq 1$) \\ \hline
Pareto & $\displaystyle1-\left(\frac{b}{x}\right)^{a}$&$a,b>0$&$(b,\infty)$ & no&no&no&yes \\
 & & & & & & & ($a\geq 1$)\\  \hline
\end{tabular}}
\end{table}

According to the classification of Definition~\ref{classes}, we determine four methods to derive SSD, in the one-sample and in the two-sample problems, as stated in the following corollaries. Note that part (4) of Corollary~\ref{COR1} is already proved in Lando and Bertoli-Barsotti~\cite{lando2019}.

\begin{corollary}\label{COR1}
Let $X \sim F$. If $i \geq j$, the following conditions imply $X_{i:n}{\ge }_2 X_{j:m}$
\begin{enumerate}
\item
$F\in\mathcal{F}_C$ and $\frac{i}{n+1} \geq \frac{j}{m+1}$.
\item
$F\in\mathcal{F}_{CL}$ and $\psi(i) - \psi(n - i + 1)\geq \psi(j) - \psi(m - j+ 1)$, where $\psi$ is the digamma function.
\item
$F\in\mathcal{F}_{IFR}$ and $\sum_{k=n-i+1}^{n}{\frac{1}{k}}\geq \sum_{k=m-j+1}^{m}{\frac{1}{k}}$.
\item
$F\in\mathcal{F}_{CO}$ and $i\geq j$ and $\frac{i}{n} \geq \frac{j}{m}$.\\
\end{enumerate}
\end{corollary}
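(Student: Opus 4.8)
The plan is to apply Theorem~\ref{THM1} separately in each of the four cases, choosing the reference CDF $H$ according to the class to which $F$ is assumed to belong. First I would observe that the four classes of Definition~\ref{classes} are precisely the collections of CDFs that are $H^{-1}$--convex for the four reference functions listed just before the definition. Indeed, $\mathcal{F}_C$, $\mathcal{F}_{CL}$, $\mathcal{F}_{IFR}$ and $\mathcal{F}_{CO}$ are defined by requiring convexity of $F$, $\log\frac{F}{1-F}$, $-\log(1-F)$ and $\frac{F}{1-F}$ respectively, and these are exactly $H^{-1}\!\circ\! F$ for the choices $H^{-1}(p)=p$, $\log\frac{p}{1-p}$, $-\log(1-p)$ and $\frac{p}{1-p}$. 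Hence membership of $F$ in the relevant class supplies directly the $H^{-1}$--convexity hypothesis of Theorem~\ref{THM1}.

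With $i\geq j$ already assumed throughout, the only remaining hypothesis of Theorem~\ref{THM1} is the expectation inequality $E(H^{-1}\!\circ\! B_1)\geq E(H^{-1}\!\circ\! B_2)$, where $B_1\sim beta(i,n-i+1)$ and $B_2\sim beta(j,m-j+1)$. For this I would invoke Lemma~\ref{P2}, which evaluates each of these expectations in closed form for the four choices of $H^{-1}$. Substituting the four formulas turns the abstract condition $E(H^{-1}\!\circ\! B_1)\geq E(H^{-1}\!\circ\! B_2)$ into exactly the numerical conditions displayed in parts (1)--(3): namely $\frac{i}{n+1}\geq\frac{j}{m+1}$, then $\psi(i)-\psi(n-i+1)\geq\psi(j)-\psi(m-j+1)$, and then $\sum_{k=n-i+1}^{n}\frac{1}{k}\geq\sum_{k=m-j+1}^{m}\frac{1}{k}$. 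In each of these three cases the conclusion $X_{i:n}\ge_2 X_{j:m}$ then follows immediately from Theorem~\ref{THM1}, with no further work.

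The only step requiring a small calculation is part (4), where Lemma~\ref{P2} gives $E(H^{-1}\!\circ\! B_{i,n})=\frac{i}{n-i}$, so the expectation inequality reads $\frac{i}{n-i}\geq\frac{j}{m-j}$, whereas the statement instead lists $\frac{i}{n}\geq\frac{j}{m}$. I would reconcile these by clearing denominators: since $n-i$ and $m-j$ are positive, $\frac{i}{n-i}\geq\frac{j}{m-j}$ is equivalent to $i(m-j)\geq j(n-i)$, and the $-ij$ terms cancel to leave $im\geq jn$, i.e. $\frac{i}{n}\geq\frac{j}{m}$. This elementary equivalence is the main point to verify; once it is in place, part (4) is again a direct application of Theorem~\ref{THM1}. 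I do not anticipate any genuine obstacle here, as the entire corollary is a specialization of Theorem~\ref{THM1} obtained by reading off the four closed-form expectations of Lemma~\ref{P2}.
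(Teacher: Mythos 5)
Your proposal is correct and takes essentially the same route as the paper: apply Theorem~\ref{THM1} with $H$ chosen as the uniform, logistic, exponential and log-logistic CDF respectively, and read off the expectations $E(H^{-1}\!\circ\! B_1)$, $E(H^{-1}\!\circ\! B_2)$ from Lemma~\ref{P2}. Your explicit reconciliation in case (4) of $\frac{i}{n-i}\geq\frac{j}{m-j}$ with the stated condition $\frac{i}{n}\geq\frac{j}{m}$ is a correct detail that the paper's proof leaves implicit.
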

\begin{proof}
As usual, let $B_1\sim beta(i,n-i+1)$ and $B_2\sim beta(j,m-j+1)$. The results can be proved by repeated application of Theorem~\ref{THM1}, each one based on a different choice for the function $H$ in
(\ref{decomposition}), and, for each case, deriving the expressions of $E(H^{-1} \!\circ\! B_1)$ and $E(H^{-1} \!\circ\! B_2)$ through Lemma~\ref{P2}. According to Definition~\ref{classes}, each of the stated four cases correspond to choosing $H$ as the CDF of the uniform, logistic, exponential and log-logistic, respectively.
\end{proof}

Corollary~\ref{COR1} yields SSD between order statistics by imposing conditions both on i) the shape of the parent distribution (different classes) and ii) the ranks and the sample sizes, $i,j,n,m$. As for i), the relations between different classes are depicted in \eqref{class}. On what regards ii), if $i\geq j$ the following implications follow straightforwardly from Theorem~\ref{THM2}:
\begin{eqnarray*}
\frac{i}{n}\geq\frac{j}{m} & \quad\Rightarrow\quad & \sum_{k=n-i+1}^{n}{\frac{1}{k}}\geq \sum_{k=m-j+1}^{m}{\frac{1}{k}};\\
\sum_{k=n-i+1}^{n}{\frac{1}{k}}\geq \sum_{k=m-j+1}^{m}{\frac{1}{k}} & \quad\Rightarrow\quad & \frac{i}{n+1} \geq \frac{j}{m+1};\\
\sum_{k=n-i+1}^{n}{\frac{1}{k}}\geq \sum_{k=m-j+1}^{m}{\frac{1}{k}} & \quad\Rightarrow\quad & \psi(i) - \psi(n - i + 1)\geq \psi(j) - \psi(m - j+ 1).
\end{eqnarray*}

Corollary~\ref{COR1} is useful in a nonparametric context, in which the functional form of $F$ is supposed to be unknown, but some general assumptions on its shape can be made, which can be verified by means of statistical testing, as we discuss in Section~4. Corollary \ref{COR1} can be used \textit{a fortiori} in a parametric context, if the parameters are supposed to be unknown. Finally, if the exact form of $F$ is known (that is a quite unrealistic assumption in statistics), we can directly compare $E(X_{i:n})$ and $E(X_{j:m})$. The next example illustrates some possible applications of Corollary \ref{COR1}.

\begin{example}
Let $F\in \mathcal{F}_{CO}$. Take for instance $n = 200, j = 43, m = 44$. Suppose we need to determine the smallest value of $i$ such that $X_{i:n}{\ge }_2X_{j:m}$. Corollary \ref{COR1} yields $X_{i:200}{\ge }_2X_{43:44}$ for $i\geq\left\lceil {\mathrm{max} \{j,\frac{nj}{m}\}\ }\right\rceil =196$, where $\left\lceil \bullet \right\rceil $ denotes the ceiling function, whereas the values $i=194,i=195$ are not sufficient to guarantee SSD. Now, assume we have additional information about the parent distribution; say $F\in \mathcal{F}_{IFR}$. Since $\sum_{k=l}^{200}{\frac{1}{k}}$ is decreasing in $l$, it is easy to check that $ \sum_{k=7}^{200}{\frac{1}{k}}\geq \sum_{k=2}^{44}{\frac{1}{k}}> \sum_{k=8}^{200}{\frac{1}{k}}$, meaning that $X_{i:200}{\ge }_2X_{43:44}$, for every $i\geq 194$ while we cannot ensure $X_{i:200}{\ge }_2X_{43:44}$ for $i\leq 193$. In a parametric context, if we assume that  $F$ is a Gamma distribution with unknown shape parameter $a\geq 1$ and unknown scale parameter $b$, we know that $X_{i:200}{\ge }_2X_{43:44}$ for $i\geq 194$, as $F\in \mathcal{F}_{IFR}$.
Moreover, if we assume that the parameters are known, e.g. $a=b=2$, we can compute the expectations of the order statistics and surprisingly, we obtain again that SSD holds just for $i\geq 194$, meaning that strong additional assumptions on the parent distribution do not necessarily weaken the conditions on $i$, obtained through Corollary \ref{COR1}.
\end{example}

Let $X_1,\ldots ,X_n$ denote a sample of i.i.d. RVs from an RV $X$ and $Y_1,\ldots ,Y_m$ denote a sample of i.i.d. RVs from another RV $Y$. The following corollary enables the determination of the sample sizes $n\ \mathrm{and}\ m$ and the ranks $i$ and $j$ such that $X_{i:n}{\geq}_2Y_{j:m}$ by introducing an extra condition on $X$ and $Y$. In particular, we need that $X$ dominates $Y$ w.r.t. an order which is easy to verify and is stronger than SSD, namely $X\geq_{1+\frac{1}{k}} Y$, where $k\geq i$ (\textit{a fortiori}, FSD is clearly sufficient).

\begin{corollary}\label{COR2}
Assume that $i\geq j$ and $X\geq_{1+\frac{1}{k}} Y$, where $k\geq i$. Each one of the following conditions imply $X_{i:n}{\geq}_2Y_{j:m}$.
\begin{enumerate}
\item
$F_Y\in\mathcal{F}_C$ and $\frac{i}{n+1} \geq \frac{j}{m+1}$.
\item
$F_Y\in\mathcal{F}_{CL}$ and $\psi(i) - \psi(n - i + 1)\geq \psi(j) - \psi(m - j+ 1)$.
\item
$F_Y\in\mathcal{F}_{IFR}$ and $\sum_{k=n-i+1}^{n}{\frac{1}{k}}\geq \sum_{k=m-j+1}^{m}{\frac{1}{k}}$.
\item
$F_Y\in\mathcal{F}_{CO}$ and $i\geq j$ and $\frac{i}{n} \geq \frac{j}{m}$.\\
\end{enumerate}
\end{corollary}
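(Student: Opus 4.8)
The plan is to compare $X_{i:n}$ and $Y_{j:m}$ through the intermediate order statistic $Y_{i:n}$, using the transitivity of $\geq_2$. The one-sample part is handled directly by the previous corollary: in each of the four cases the hypothesis on $F_Y$ and the accompanying rank/size condition are exactly those of the corresponding part of Corollary~\ref{COR1}, so applying that corollary with parent distribution $F_Y$ (recalling $i\geq j$) yields $Y_{i:n}\geq_2 Y_{j:m}$. It then remains to prove the equal-rank, equal-sample-size two-sample comparison $X_{i:n}\geq_2 Y_{i:n}$, after which transitivity gives $X_{i:n}\geq_2 Y_{i:n}\geq_2 Y_{j:m}$, which is the claim.

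To obtain $X_{i:n}\geq_2 Y_{i:n}$ I would first convert the fractional-degree hypothesis into a statement about maxima. Since $X\geq_{1+\frac{1}{k}} Y$ with $k\geq i$, the definition of this order together with the implication $X_{k:k}\geq_2 Y_{k:k}\Rightarrow X_{h:h}\geq_2 Y_{h:h}$ for all $h\leq k$ forces $X_{i:i}\geq_2 Y_{i:i}$. Writing $D(s)=F_X^{-1}(s)-F_Y^{-1}(s)$ and using the quantile (generalized-Lorenz) characterization of SSD, this is equivalent to $\int_0^q D(s)\,f_{C}(s)\,ds\geq 0$ for every $q\in[0,1]$, where $C\sim beta(i,1)$ is the uniform-scale version of the maximum $X_{i:i}$. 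The target $X_{i:n}\geq_2 Y_{i:n}$, written the same way, is precisely $\int_0^q D(s)\,f_{B_1}(s)\,ds\geq 0$ for all $q$, with $B_1\sim beta(i,n-i+1)$. Hence the problem reduces to transferring the sign of this partial integral from the weight $f_C$ to the weight $f_{B_1}$.

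The key observation is that $f_{B_1}/f_C$ is nonincreasing: the two betas share the first parameter $i$, and Lemma~\ref{LEM3}(1), applied to $C\sim beta(i,1)$ and $B_1\sim beta(i,n-i+1)$ (where $1\leq n-i+1$), shows that $f_C/f_{B_1}$ is increasing, so that $\rho:=f_{B_1}/f_C\propto(1-s)^{n-i}$ is nonnegative and nonincreasing on $[0,1]$. Setting $\Phi(q)=\int_0^q D(s)f_C(s)\,ds\geq 0$ and integrating by parts, $\int_0^q D\,f_{B_1}=\int_0^q\rho\,d\Phi=\rho(q)\Phi(q)-\int_0^q\Phi(s)\rho'(s)\,ds\geq 0$, since $\rho\geq 0$, $\rho'\leq 0$ and $\Phi\geq 0$. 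This yields $X_{i:n}\geq_2 Y_{i:n}$ and completes the proof. I expect this lifting step — passing from $X_{i:i}\geq_2 Y_{i:i}$ to $X_{i:n}\geq_2 Y_{i:n}$ for $n\geq i$ — to be the main obstacle, because the single-crossing tools of Lemma~\ref{LEM1} are unavailable (we do not assume that $F_X-F_Y$ crosses only once) and Lemma~\ref{LEM2} cannot be invoked directly, the beta-reweighting acting inside rather than outside the quantile transform; the monotone-likelihood-ratio and integration-by-parts argument above is what makes the reduction go through, and it may alternatively be quoted from Lando and Bertoli-Barsotti~\cite{landodist}.
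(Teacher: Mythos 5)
Your proposal is correct, and its skeleton coincides with the paper's: both proofs split the claim as $X_{i:n}\geq_2 Y_{i:n}\geq_2 Y_{j:m}$, obtain the second inequality by applying Corollary~\ref{COR1} to the parent $F_Y$ (this is where each of the four hypotheses enters), and derive the first from the fractional-degree hypothesis, concluding by transitivity. The difference lies entirely in how the step $X_{k:k}\geq_2 Y_{k:k}\Rightarrow X_{i:n}\geq_2 Y_{i:n}$ (for $i\leq k\leq n$) is justified: the paper outsources it to Theorem~1 of Lando and Bertoli-Barsotti~\cite{landodist}, whereas you prove it from scratch via the generalized Lorenz curve characterization of SSD, writing both dominance statements as $\int_0^q D(s)\,f(s)\,ds\geq 0$ with $D=F_X^{-1}-F_Y^{-1}$ and beta weights, and transferring positivity of the partial integrals from the weight $f_{beta(i,1)}$ to $f_{beta(i,n-i+1)}$ by the nonincreasing ratio $\rho(s)\propto(1-s)^{n-i}$ and integration by parts. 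That argument is sound (the ratio monotonicity is exactly Lemma~\ref{LEM3}(1), and $\Phi(0)=0$ with $\rho\geq 0$, $\rho'\leq 0$ makes the boundary and integral terms nonnegative), and it buys a self-contained proof of the one external ingredient the paper cites; in fact it works just as well with $C\sim beta(k,1)$ directly, since $s^{i-k}(1-s)^{n-i}$ is still nonincreasing for $i\leq k\leq n$, so your intermediate reduction to $X_{i:i}\geq_2 Y_{i:i}$ is dispensable. One further remark: the paper's proof opens with an observation that $i\geq j$ and $n-i\leq m-j$ give $F_{B_1}\geq^*_c F_{B_2}$, a condition not among the corollary's hypotheses and not used in the remainder of the argument; your proof avoids this detour entirely.
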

\begin{proof}
Let $B_1\sim beta(i,n-i+1)$ and $B_2\sim beta(j,m-j+1)$. First, we prove that $i\geq j$ and $n-i\leq m-j$ implies $F_{B_1}\geq^*_c F_{B_2}$. Indeed, $F_{B_1}\geq^*_c F_{B_2}$ iff the likelihood ratio $\ell(x)=\frac{f_{B_1}(x)}{f_{B_2}(x)}$ is increasing, Chan et al.~\cite{chan1990}, which is implied by $i\geq j$ and $n-i\leq m-j$ (see Lemma~\ref{LEM3}).

Condition $X\geq_{1+\frac{1}{k}} Y$ is equivalent to $\int_0^x{(F_X (t))^k}\leq \int_0^x{(F_Y (t))^k}$, for every $x\geq 0$. Therefore, if $i\leq k$, then $P_k>_c^* F_{B_1}$, where $P_k$ is a CDF such that $P_k(t)=t^k$ on the support $[0,1]$. Hence, we apply Theorem 1 of Lando and Bertoli-Barsotti~\cite{landodist}, which establishes that $X_{k:k}\geq_2 Y_{k:k}\Rightarrow X_{i:n}\geq_2 Y_{i:n}$, for every $i\leq k$. Now, if $ i\geq j$ and if any of the conditions 1., 2., 3. or 4. hold, we can apply Corollary~\ref{COR1}, which yields $Y_{i:n}\geq_2 Y_{j:m}$, and the conclusion follows by transitivity.
\end{proof}

Basically, we can compare order statistics with different parent distributions according to the strength of the dominance relation between them. Such strength, determined by $k$, imposes constraints on $i$ ($i \leq k$). FSD enables the comparisons for every value of $i \geq j$ but, on the other hand, it is the strongest order. Put otherwise, all sets of conditions are well balanced: if we relax some constraints, we need to compensate by strengthening some of the others.

Similarly to Corollary~\ref{COR1}, Corollary~\ref{COR2} is suitable for those situations in which the parent distributions are supposed to be unknown: in such cases, the dominance relation between them can be tested nonparametrically. Tests for stochastic dominance of degree $1+1/k$ may be obtained by readapting SSD tests, using the relation $F_X\geq_{1+\frac{1}{k}}F_Y\Leftrightarrow F^k_X\geq_{2}F^k_Y$, however, this is beyond the scope of our paper. In any case, we can always rely on existing tests for FSD.

\begin{example}
Let $X$ be a Dagum RV with CDF $F_X(x)=(1 + \frac{27}{x^3})^{-2}$, for $x>0$ and let $Y$ be a log-logistic RV with parameters $a=b=2$. First we determine the degree of the dominance relation between $X$ and $Y$. By studying the function $F_X-F_Y$, we find that $S^{-}(F_X-F_Y)\le 1$ and the sign sequence starts with $-$ (the two CDFs cross at $x\approx 13.57$). Since the power function is increasing, we have also $S^{-}(F_X^k-F_Y^k)\le 1$ and the sign sequence starts with $-$, so that $X\geq_{1+\frac{1}{k}} Y$ iff $E(X_{k:k})\geq E(Y_{k:k})$, where $E(X_{k:k})=\frac{- \Gamma(-\frac13)\Gamma(\frac13 + 2 k)}{\Gamma(2 k)}$ and $E(Y_{k:k})=\frac{2 \sqrt{\pi} \Gamma(\frac12 + k)}{\Gamma(k)}$, Lin~\cite{lin2000}. Therefore, we obtain $E(X_{k:k})\geq E(Y_{k:k})$ iff $k\leq9$, that is, $X\geq_{1+\frac{1}{9}} Y$. Using the information about $F_Y$, that is $F_Y\in\mathcal{F}_{CO}$ (note that $F_Y$ belongs only to this class), it follows from Corollary~\ref{COR2} that $X_{i:n}{\ge }_2Y_{j:m}$ for $9\geq i\geq j$ and $\frac{i}{n}\geq\frac{j}{m}$. For instance, letting $n=30$, $j=4$, $m=25$, we get $X_{i:30}{\geq}_2Y_{4:25}$ for $5\leq i\leq 9$, although we cannot guarantee SSD for $i>9$. The constraint $i\leq 9$ can be removed only by strengthening the dominance relation between the parent distributions. For instance, if we take $Z$ to be a log-logistic with parameters $a=2$, $b=3$ then it is easy to see that $Z\geq_1 Y$, therefore $X_{i:30}{\geq}_2Y_{5:20}$ holds for every $i\geq 8$.
\end{example}
The following example illustrates that despite having some distribution with unknown parameters, one can derive SSD between their corresponding order statistics. Moreover, we show that additional information about the shape of the parent distribution has an effective impact in determining SSD.
\begin{example}
Let $X\geq_1 Y$, where $Y$ is a logistic RVs with unknown parameters. Let $i = 18$, $n = 200$, $j = 4$, $m = 44$. We can derive SSD from conditions 2., 3. or 4. of Corollary~\ref{COR2} (the support of the logistic is unbounded, so definitely it does not belong to $\mathcal{F}_C$). Nevertheless, $F_Y\in\mathcal{F}_{CO}$ but $\frac{18}{200}<\frac{4}{44}$ (condition 4. does not hold); $F_Y\in\mathcal{F}_{IFR}$ but again $\sum_{k=183}^{200}{\frac{1}{k}} < \sum_{k=41}^{44}{\frac{1}{k}}$ (condition 3. does not hold). Finally, since $F_Y\in\mathcal{F}_{CL}$ and $\psi(18) - \psi(183) >\psi(4) - \psi(41)$, we can ensure that $X_{18:200}{\geq}_2Y_{4:44}$ only through condition 2..
\end{example}

Sufficient SSD conditions described in Corollary~\ref{COR2} can be of multiple use, for instance they can be applied to determine the range of an unknown parameter of a parent distribution, based on dominance constraints, as shown in the example below.

\begin{example}
Suppose we have a $j$-out-of-$m$ system with log-logistic parent distribution $F_Y$, with parameters $a_Y=3$, $b_Y=1$, and we are looking for an $i$-out-of-$n$ system with a log-logistic parent distribution $F_X$ that dominates $F_Y$. Assume that only the scale parameter $b_X$ is known, say $b_X=2$. We need to determine the values of the shape parameter $a_X$ such that $X_{i:n} \geq_2 Y_{j:m}$ for some given instances $i,j,n,m$. Simple algebra shows that $S^- (F_X-F_Y )=1$ with sign sequence $-,+$, for $ a_X\geq a_Y$, hence $X\geq_{1+\frac{1}{i}} Y$ iff $E(X_{i:i})\geq E(Y_{i:i})$, where $E(X_{i:i})=b_X\frac{i\Gamma(\frac{a_X-1}{a_X})\Gamma(\frac1{a_X}+i)}{\Gamma(1+i)}$ (and similarly for $Y$), Lin~\cite{lin2000}. Let, for instance, $i=30$, $j=10$, $n=110$ and $m=100$. Since $F_X,F_Y\in F_{CO}$ and $\frac{i}{n} \geq \frac{j}{m}$. Numerical computation gives $3=a_Y\leq a_X\leq 5.58 \Rightarrow X\geq_{1+\frac{1}{30}} Y$, which, in turn, implies $X_{30:110}\geq_2 Y_{20:100}$.
\end{example}

\section{Testing $H^{-1}$--convexity}
In the literature, various methods have been proposed to test failure rate properties of distributions, with particular reference to the IFR property, Barlow and Proschan~\cite{barlow1969}, Tenga and Santner~\cite{tenga1984}, Bickel~\cite{bickel1969}, Bickel and Doksum~\cite{bickeldoksum}, Proscha nd Pyke~\cite{proschan1967}, or Sahoo and Sengupta~\cite{sahoo2017}. In this section we study a rather general method to test $H^{-1}$--convexity, where, for technical reasons, $H^{-1}$ is a quantile function such that $H^{-1}(0)=H(0)=0$. Denote with $\mathcal{F}_H$ the family of those CDFs that are $H^{-1}$--convex, that is, such that $H^{-1}\!\circ\! F$ is convex. We aim at testing the null hypothesis $\mathcal{H}_0:F \in \mathcal{F}_H$ against the alternative $\mathcal{H}_1:F \notin \mathcal{F}_H$. In particular, as the logit function does not fit the assumption $H^{-1}(0)=0$, we may be interested in checking whether $F\in \mathcal{F}_C$, $F\in \mathcal{F}_{IFR}$ or $F\in \mathcal{F}_{CO}$, in order to derive SSD between order statistics through a nonparameteric approach.

Denote by $F_n$ the empirical CDF of a random sample $\mathbf{X}=(X_1,\ldots,X_n)$ from $F$, that is, $F_n(t)=\frac1n\sum_{i=1}^{n}{\mathbf{1}_{X_i \leq t}}$. Then $H^{-1} \!\circ\! F_n$ converges almost surely to $H^{-1}\!\circ\! F$ on $[0,F^{-1}(1))$. Denote by $\mathbf{x}=(x_1,\ldots,x_n)$ an ordered realization of the random sample $\mathbf{X}$. Following Tenga and Santner~\cite{tenga1984}, our test is based on the distance between $H^{-1} \!\circ\! F_n$ and its greatest convex minorant (GCM) $g$, that is, the largest convex function that does not exceed  $H^{-1} \!\circ\! F_n$, or, formally,
\begin{equation*}
g(x)=\sup{\{\phi(x): \phi \mbox{ is convex and }\phi(y)\leq H^{-1} \!\circ\! F_n(y),\forall y\in [x_1,x_n]\}},
\end{equation*}
To get a more concrete description of the GCM, the sample $\mathbf{x}$ determines a step function $H^{-1} \!\circ\! F_n$. For notational purposes, henceforth we set $h_k=H^{-1} \!\circ\! F_n(X_{k:n})=H^{-1}(\frac{k}{n})$. Let $i<j<n$ be positive integers, and $L_n^{i,j}$ the straight line connecting $(x_{i},h_{i-1})$ and $(x_{j},h_{j-1})$, that is,
\begin{equation*}
L_n^{i,j}(t)=h_{i-1}+\frac{t-x_{i}}{x_{j-1}-x_{i-1}} \left(h_{j-1}-h_{i-1}\right).
\end{equation*}
The GCM $g$ of the step function $H^{-1} \!\circ\! F_n$, corresponding to $\mathbf{x}$, is a piecewise linear function on $[x_1,x_n]$ defined by
\begin{equation*}
g(t)=
\begin{cases} h_1& t=x_1 \\
\min\left\{h_{j-1},\min \{L_n^{i,k}(x_j):1\leq i<j<k\leq n\}\right\} &t=x_j,\,2\leq j<n \\ 
h_{n-1} &t=x_n
\end{cases}
\end{equation*}
and by linear interpolation for $t\in(x_{j-1},x_j)$.
Intuitively, the value of $g$ at $x_j$ is the minimum of the heights of all segments connecting the nodes $(x_i, h_{i-1})$ and $(x_k, h_{k_1})$, where $i<j\leq k$ (see for instance Figure~\ref{f1}).

In a probabilistic setting, the GCM associated to the random sample $\mathbf{X}$ is an estimator of $H^{-1} \!\circ\! F$ under the assumption of convexity.
The test statistic is based on a distance between $H^{-1} \!\circ\! F_n$ and its GCM, $g$. In particular, we consider a weighted Kolmogorov-Smirnov test statistic, that is,
\begin{equation*}
\KS_n(X_1,...,X_n)=\KS_n(\mathbf{X})=\max_{j\in (1,n)}\{{w_j(h_{j-1}-g(X_{j:n}))}\},
\end{equation*}
where the weights $w_j$  are suitably chosen according to $H^{-1}.$
If $H^{-1}$ is the identity we set $w_j=1$. If $H^{-1}$ is convex (that is, if we test odds-convexity or the IFR property), $h_{j-1}\geq g(X_{j:n})$ and the distance $h_j-h_{j-1}$ is increasing, for $1\leq j\leq n$. Therefore, the weights are tailored to downsize the effect of larger differences due to larger $j$'s. In particular, we set $w_j=\frac{1}{h_{j-1}}$, which provided the best performance in our analysis. Note also that $\KS_n$ is scale invariant.

We reject the null hypothesis for large values of $\KS_n$. The critical values or the $p$-values of the test may be obtained via the least favorable distribution of the test statistic under $\mathcal{H}_0$, which can be determined following the same approach of Tenga and Santner~\cite{tenga1984}.
\begin{proposition}\label{P1}
Let $\mathbf{Y}=(Y_{1},\ldots,Y_{n})$ be a random sample from $Y\sim H$, where $H(0)=0$. Under $\mathcal{H}_0:F \in \mathcal{F}_H$, $\KS_n(\mathbf{Y})\geq_1 \KS_n(\mathbf{X})$.
\end{proposition}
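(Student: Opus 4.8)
The plan is to build a single coupling of the two samples from common uniforms and to show that, pointwise on this coupling, $\KS_n(\mathbf{Y})\ge \KS_n(\mathbf{X})$; first-order dominance then follows immediately. Let $U_1,\dots,U_n$ be i.i.d.\ uniform on $(0,1)$ and set $X_i=F^{-1}(U_i)$ and $Y_i=H^{-1}(U_i)$, so that $\mathbf{X}$ and $\mathbf{Y}$ carry the required marginals while sharing the same ordering of coordinates: $X_{k:n}=F^{-1}(U_{k:n})$ and $Y_{k:n}=H^{-1}(U_{k:n})$. The crucial structural observation is that the heights $h_{k-1}=H^{-1}(\tfrac{k-1}{n})$ entering the GCM, as well as the weights $w_j$ (functions of the $h$'s only), are deterministic and hence identical for the two samples; the sole difference between $\KS_n(\mathbf{X})$ and $\KS_n(\mathbf{Y})$ lies in the abscissas of the nodes $(X_{k:n},h_{k-1})$ versus $(Y_{k:n},h_{k-1})$. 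Writing $T=H^{-1}\circ F$, continuity of $F$ gives $Y_{k:n}=H^{-1}(F(X_{k:n}))=T(X_{k:n})$, so passing from $\mathbf{X}$ to $\mathbf{Y}$ amounts to applying the increasing map $T$ to every abscissa, leaving the heights fixed; under $\mathcal{H}_0$ this $T$ is moreover convex.

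Next I would express the deviation at each node purely through chords, which sidesteps the fact that the vertices of the GCM may move when the abscissas are transformed. From the displayed formula for $g$, at an interior node $j$ one has $g(X_{j:n})=\min\{h_{j-1},\min_{i<j<k}L_n^{i,k}(X_{j:n})\}$, whence
\begin{equation*}
h_{j-1}-g(X_{j:n})=\max\Big(0,\ \max_{i<j<k}\big[h_{j-1}-L_n^{i,k}(X_{j:n})\big]\Big).
\end{equation*}
It therefore suffices to prove, for each fixed triple $i<j<k$, that the chord deficit $h_{j-1}-L_n^{i,k}$ does not decrease when $T$ is applied to the abscissas; the maximum over triples (and with $0$) preserves the inequality, and multiplying by the common nonnegative weight $w_j$ and maximizing over $j$ then yields $\KS_n(\mathbf{Y})\ge\KS_n(\mathbf{X})$. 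Because $h_{k-1}>h_{i-1}$ for $k>i$, this deficit is a strictly decreasing function of the interpolation ratio of the middle abscissa, so it fails to decrease under $T$ precisely when
\begin{equation*}
\frac{T(X_{j:n})-T(X_{i:n})}{T(X_{k:n})-T(X_{i:n})}\ \le\ \frac{X_{j:n}-X_{i:n}}{X_{k:n}-X_{i:n}},
\end{equation*}
which is exactly the convexity of $T$: writing $X_{j:n}=(1-\lambda)X_{i:n}+\lambda X_{k:n}$ with $\lambda$ the right-hand ratio, convexity gives $T(X_{j:n})\le(1-\lambda)T(X_{i:n})+\lambda T(X_{k:n})$, i.e.\ the displayed estimate.

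Finally, having established $\KS_n(\mathbf{Y})\ge\KS_n(\mathbf{X})$ almost surely on the coupling, I would conclude that $P(\KS_n(\mathbf{Y})\le t)\le P(\KS_n(\mathbf{X})\le t)$ for every $t$, which is precisely $\KS_n(\mathbf{Y})\ge_1\KS_n(\mathbf{X})$ by Definition~\ref{FSD}; the identity case $H^{-1}(p)=p$ with $w_j\equiv 1$ is covered verbatim. I expect the one genuinely delicate point to be the monotonicity of the GCM deviation under a convex reparametrization of the abscissas: a naive argument tracking a single supporting chord fails, since the optimal supporting vertices may change under $T$. The device that resolves this is the representation above of the deviation as a maximum of individual chord deficits, each controllable by the one-line convexity estimate, after which monotonicity of the full GCM comes for free from the stability of $\max$ under termwise domination.
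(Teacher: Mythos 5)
Your proof is correct, and it reaches the same pointwise inequality on the same coupling as the paper, but by a genuinely different (and more self-contained) route. The paper also couples the two samples through the quantile transform --- it sets $x_i=F^{-1}\!\circ\! H(y_i)$, notes that both step functions share the common heights $h_k=H^{-1}(k/n)$, and then invokes Theorem~2.2 of Tenga and Santner to conclude that applying the increasing concave map $v=F^{-1}\!\circ\! H$ (with $v(0)=0$) to the abscissas raises the GCM at every node, hence shrinks every deviation. You prove that monotonicity statement from scratch: writing the node deviation as $\max\bigl(0,\max_{i<j<k}[h_{j-1}-L_n^{i,k}]\bigr)$ reduces everything to a single chord, where the deficit is $h_{j-1}-(1-\lambda)h_{i-1}-\lambda h_{k-1}$ with $\lambda$ the interpolation ratio, and the one-line convexity estimate $\lambda'\le\lambda$ for $T=H^{-1}\!\circ\! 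F$ does the rest; the stability of $\max$ under termwise domination then handles the fact that the supporting vertices of the GCM may move. What your argument buys is transparency about exactly where the hypothesis $F\in\mathcal{F}_H$ enters, independence from the external reference (whose anchoring condition $v(0)=0$ your chord computation never uses, although finiteness of $H^{-1}(0)$ is still needed for the statistic itself), and an explicit verification that the weights $w_j$, being functions of the deterministic heights alone, are common to both samples --- a point the paper leaves implicit. The paper's version is shorter because it delegates the key lemma to the cited reference. One cosmetic remark: in the degenerate case $h_{k-1}=h_{i-1}$ the chord deficit is constant in $\lambda$ rather than strictly decreasing, but the weak inequality you need still holds, so nothing breaks.
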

\begin{proof}
Let $\mathbf{y}=(y_1,\ldots,y_n)$ be an ordered random sample from $H$ and denote the corresponding empirical CDF by $H_n$. Subsequently, the GCM of $H^{-1} \!\circ\! H_n$ is denoted by $g^*$.
Let $x_i=v(y_i)=F^{-1}\!\circ\! H(y_i)$, for $i=1,\ldots,n$. Then $\mathbf{x}=(x_1,\ldots,x_n)$ is an ordered random sample from $F$. It is sufficient to show $\KS_n(\mathbf{y})\geq\KS_n(\mathbf{x})$ for any ordered vector $\mathbf{y}$. For both vectors, we have $H^{-1}\!\circ\! H_n(y_i)=H^{-1}\!\circ\! F_n(x_i)=H^{-1}(\frac{i}{n})=h_i$, for $i=1,\ldots,n$.
Under $\mathcal{H}_0$, the function $v=F^{-1}\!\circ\! H$ is increasing and concave with $v(0)=0$. Hence, Theorem~2.2 of Tenga and Santner~\cite{tenga1984} yields $g(y_{i}) \geq g^*(x_{i})$, for $i=1,\ldots,n$, which implies the conclusion of the statement.
\end{proof}



The least favorable distribution of $\KS_n$ under $\mathcal{H}_0$ can be computed through simulation. Let $\mathbf{x}$ be a realization of $\mathbf{X}$. We reject $\mathcal{H}_0$ when $\KS_n(\mathbf{x}) \geq c_{\alpha,n}$, where $c_{\alpha,n}$ is the solution of $P(\KS_n(\mathbf{Y})\geq c_{\alpha,n}) \geq \alpha$ and $\alpha$ is the size of the test. Alternatively, we can compute the $p$-value of the test, that is, $p=P(\KS_n(\mathbf{Y})\geq \KS_n(\mathbf{x}))$.

\subsection{Simulations}
In this subsection we focus on the null hypothesis $\mathcal{H}_0:F\in \mathcal{F}_{CO}$. Likewise, we may obtain tests for $\mathcal{H}_0:F\in \mathcal{F}_{C}$ or $\mathcal{H}_0:F\in \mathcal{F}_{IFR}$ (as for the IFR class, the reader is referred to Tenga and Santner~\cite{tenga1984} and to the aforementioned literature).
The computational work has been performed in Mathematica~\cite{mathematica}.
Compatibly with our computing capacities, we generated 3000 random samples of sample sizes $n=10,15,20,25,30$ and 1000 random samples of size $n=40,50,75,100$. Correspondingly, we obtained quantiles of the test statistic $\KS_n$, reported in Table~2, which may be used to determine (approximately) critical values and $p$-values. These tests are conservative, thus, as rule of thumb, values of the test statistic above 0.9 may provide evidence against $\mathcal{H}_0$ (for all the sample sizes considered). Simulation studies confirm the validity of the test.
\begin{enumerate}
\item
We simulated 100 random samples from a gamma distribution with different shape parameters $a=2,1,0.5$ and $b=1$ (the test is scale invariant). The test yields large $p$-values and 100\% acceptance rate for $a=2$ and in the limit case $a=1$. Nevertheless, for $a=0.5$, although $p$-values decrease, the rejection rate is quite low (always less than 50\% for $\alpha=0.1$): this is due to the shape of the function $\frac{F}{1-F}$, which is actually convex for $x\gtrapprox 0.16$ (as it can be seen from the second derivative). Therefore, the performance of the test is satisfactory, even in the latter case, as the median of a gamma(0.5,1) is approximately $0.22$, which implies that more than half of the observations are expected to exceed 0.16 (so that it is likely that most of the nodes of $\frac{F_n}{1-F_n}$ lay on a convex curve).
\item
We simulated 100 random samples from a Pareto distribution and set $a=2,1,0.5$. The test yields large $p$-values and high acceptance rates for $a=2$. For $a=0.5$, $p$-values are low and the rejection rate is high (always more than 65\% for $\alpha=0.1$) and increases with sample size. Therefore, the simulated power of the test has an increasing trend, as $n$ increases. Moreover, in the limiting case $a=1$, the $p$-values average around 0.5 and exhibit largest standard deviation (around 0.3). This is quite logical: in fact, the CDF of the Pareto(1,1), $F$, is just a shifted version of $H(x)=\frac{x}{1+x}$, that is, $F(x)=H(x+1)$, thus $F$ and $H$ are equivalent w.r.t. $\geq_c$.
\end{enumerate}
The results of the simulations are reported in Table~3. Some examples of the construction of GCM are depicted in Figures~1~and~2.

\begin{table}[h]
\caption{Simulated quantiles of $\KS_n$ for the test $\mathcal{H}_0:F\in \mathcal{F}_{CO}$. For each value of $n$, the number of simulation runs is given at the top.}
{\scriptsize
\begin{tabular}{|l|r|r|r|r|r|r|r|r|r|} \hline
$Runs$ & \multicolumn{1}{c|}{$3000$} & \multicolumn{1}{c|}{$3000$} & \multicolumn{1}{c|}{$3000$} & \multicolumn{1}{c|}{$3000$} & \multicolumn{1}{c|}{$3000$} & \multicolumn{1}{c|}{$1000$} & \multicolumn{1}{c|}{$1000$} & \multicolumn{1}{c|}{$1000$} & \multicolumn{1}{c|}{$1000$} \\ \hline
\backslashbox{$p$}{$n$}& \multicolumn{1}{c|}{$n=10$} & \multicolumn{1}{c|}{$n=15$} & \multicolumn{1}{c|}{$n=20$} & \multicolumn{1}{c|}{$n=25$} & \multicolumn{1}{c|}{$n=30$} & \multicolumn{1}{c|}{$n=40$} & \multicolumn{1}{c|}{$n=50$} & \multicolumn{1}{c|}{$n=75$} & \multicolumn{1}{c|}{$n=100$} \\ \hline
$p=0.1$&$0.510$&$0.561$&$0.597$&$0.599$&$0.605$&$ 0.585 $&$0.582$&$0.603$&$0.600$ \\ \hline
$p=0.2$&$0.639$&$0.677$&$0.698$&$0.700$&$0.704$&$ 0.694 $&$0.694$&$ 0.715 $&$0.713$ \\ \hline
$p=0.3$&$0.734$&$0.752$&$0.777$&$0.777$&$0.768$&$ 0.776 $&$0.711$&$ 0.776 $&$0.775$ \\ \hline
$p=0.4$&$0.805$&$0.819$&$0.834$&$0.836$&$0.828$&$  0.833$&$0.838$&$0.840  $&$0.838$ \\ \hline
$p=0.5$&$0.860$&$0.870$&$0.876$&$0.882$&$0.874$&$0.877  $&$0.885$&$ 0.881 $&$0.882$ \\ \hline
$p=0.6$&$0.901$&$0.910$&$0.914$&$0.919$&$0.912$&$ 0.916 $&$0.921$&$  0.926 $&$0.921$ \\ \hline
$p=0.7$&$0.938$&$0.942$&$0.944$&$0.949$&$0.943$&$ 0.946 $&$0.948$&$0.953   $&$0.949$ \\ \hline
$p=0.8$&$0.967$&$0.966$&$0.969$&$0.971$&$0.969$&$ 0.969 $&$0.971$&$0.973   $&$0.972$ \\ \hline
$p=0.9$&$0.987$&$0.987$&$0.988$&$0.988$&$0.987$&$0.987 $&$0.989$&$0.989   $&$0.989$ \\ \hline
$p=0.95$&$0.994$&$0.994$&$0.995$&$0.995$&$0.995$&$0.995$&$0.995$&$0.995   $&$0.995$ \\ \hline
\end{tabular}}
\end{table}

\begin{table}
\caption{Simulation results ($\mathcal{H}_0:F\in \mathcal{F}_{CO}$). The data refers to 100 simulated samples from 6 different distributions. The cells contain: average $p$-values; standard deviations and acceptance rates ($\alpha=0.1$), separated by semicolons.}
{\footnotesize\begin{tabular}{|l|r|r|r|r|} \hline
\backslashbox{$F$}{$n$}& \multicolumn{1}{c|}{$n=25$} & \multicolumn{1}{c|}{$n=50$} & \multicolumn{1}{c|}{$n=75$} & \multicolumn{1}{c|}{$n=100$} \\ \hline
$gamma(2,1)$&$0.89 ;0.15; 100\%$&$0.94 ;0.12; 100\%$&$0.89 ;0.18; 99\%$&$0.91 ;0.15; 100\%$\\ \hline
$gamma(1,1)$&$0.81 ;0.2; 100\%$&$0.82 ;0.23; 100\%$&$0.82 ;0.21; 99\%$&$0.82 ;0.2; 100\%$\\ \hline
$gamma(0.5,1)$&$0.48 ;0.3; 87\%$&$0.37 ;0.25; 87\%$&$0.27 ;0.25; 72\%$&$0.2 ;0.2; 59\%$\\ \hline
$Pareto(2,1)$&$0.73 ;0.22; 95\%$&$0.81 ;0.23; 99\%$&$0.8 ;0.23; 99\%$&$0.71 ;0.26; 98\%$\\ \hline
$Pareto(1,1)$&$0.5 ;0.3; 89\%$&$0.49 ;0.28; 91\%$&$0.48 ;0.31; 85\%$&$0.53 ;0.32; 89\%$\\ \hline
$Pareto(0.5,1)$&$0.13 ;0.2; 35\%$&$0.09 ;0.12; 25\%$&$0.04 ;0.07; 14\%$&$0.04 ;0.1; 11\%$\\ \hline
\end{tabular}}
\end{table}

\begin{figure}\label{f1}
\begin{center}
\includegraphics{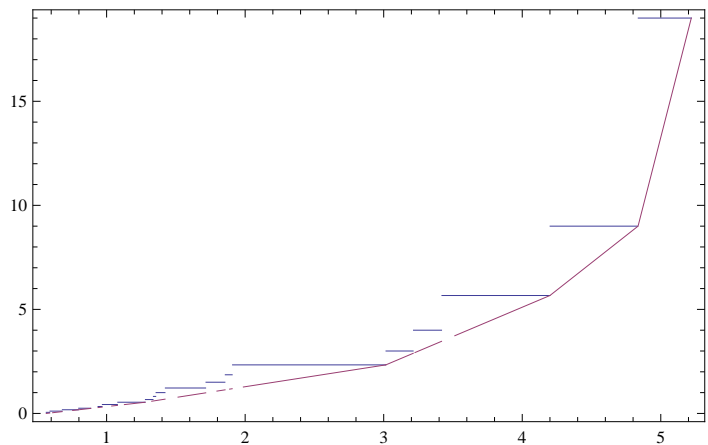}
\caption{$F\sim gamma(2,1)$ and $n=20$. The figure shows $ \frac{ F_n}{1-F_n}$ and its GCM. $\KS_n=0.36$, $p$-value=0.996.}
\end{center}
\end{figure}

\begin{figure}
\begin{center}
\includegraphics{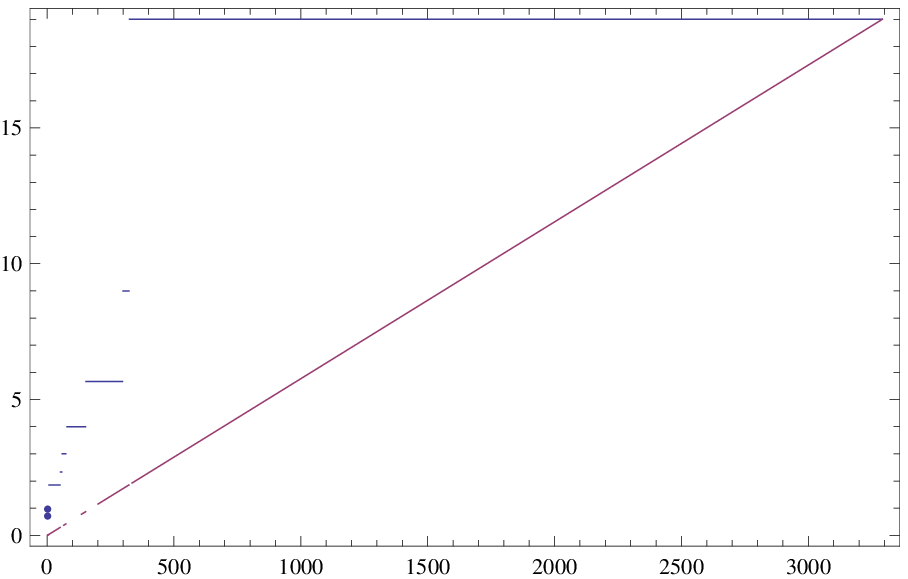}
\caption{$F\sim Pareto(0.5,1)$ and $n=20$. The figure shows $\frac{ F_n}{1-F_n}$ and its GCM. $\KS_n=0.996$, $p$-value=0.06.}
\end{center}
\end{figure}

\section{Discussion}
The method proposed in this paper enables derivation of SSD according to different assumptions on the parent's distribution shape and different conditions to be verified on ranks and sample sizes. We find that stronger (weaker) conditions on $F$ yield weaker (stronger) conditions on $(i,j,n,m)$.

Whereas most of the results in the literature regarding stochastic comparisons of order statistics rely on parametric assumptions, SSD conditions can be easily obtained by testing the hypothesis $F\in \mathcal{F}_{H}$. In the two-sample problem, we can determine whether the order statistics preserve an existing dominance relation between the parent distributions by using the same approach. The method has a wide range of application, as the four classes analyzed make it possible to compare most basic models. As shown in some examples, there are different ways to apply our results.

Finally, it should be stressed that the present study can be easily extended, in order to derive the \textit{increasing and convex order} Shaked and Santhikumar~\cite{shaked2007}, an order that is somewhat complementary to SSD. In doing so, we would deal with different classes of parent distributions, such as the \textit{decreasing failure rate} class.

\bibliographystyle{plain}      
\bibliography{biblio}   

%
%

\end{document}